\newtheorem{theorem}{Theorem}[section]
\newtheorem{lemma}[theorem]{Lemma}
\newtheorem{definition}[theorem]{Definition}
\newcommand{\wh}{\widehat}
\newcommand{\wt}{\widetilde}
\newcommand{\eps}{\epsilon}
\newcommand{\R}{\mathbb{R}}
\renewcommand{\varepsilon}{\epsilon}
\renewcommand{\tilde}{\wt}
\renewcommand{\hat}{\wh}
\renewcommand{\eps}{\epsilon}
\newcommand{\bs}{\mathsf{bs}}
\newcommand{\nil}{\mathsf{nil}}
\newcommand{\ms}{s^{\mathsf{m}}}
\newcommand{\D}{\mathcal{D}}
\newcommand{\mB}{\mathcal{B}}
\newcommand{\mN}{\mathcal{N}}
\newcommand{\sE}{\mathsf{E}}
\DeclareMathOperator*{\E}{{\mathbb{E}}}
\DeclareMathOperator{\poly}{poly}
\algnewcommand\algorithmicforeach{\textbf{for each}}
\newcommand*{\RN}[1]{\expandafter\@slowromancap\romannumeral #1@}
\title{Complexity of Equilibria in First-Price Auctions\\ under General Tie-Breaking Rules }
\author{}
\author{  
Xi Chen \\ Columbia University \\ \texttt{xichen@cs.columbia.edu}
\and Binghui Peng \\ Columbia University \\ \texttt{bp2601@columbia.edu}
}
\date{}
\begin{document}
\maketitle

\begin{abstract}
We study the complexity of finding an approximate (pure) Bayesian Nash equilibrium in a first-price auction with common priors when the tie-breaking rule is part of the input.
We show that the problem is PPAD-complete even when the tie-breaking rule is trilateral (i.e., it specifies item allocations when no more than three bidders are in tie, and adopts the uniform tie-breaking rule otherwise). This is the first hardness result for equilibrium computation in first-price auctions with common priors.
On the positive side, we give a PTAS for the problem under the uniform tie-breaking rule.
\end{abstract}

\setcounter{page}{0}
\thispagestyle{empty}

\newpage
\section{Introduction}

\def\calD{\mathcal{D}}

First-price auction is arguably the most commonly used auction format in practice \cite{vickrey1961counterspeculation,roughgarden2010algorithmic,conitzer2022pacing}, in which the highest bidder wins the item and pays her bid.
First-price auction and its variants have been widely used in online ad auctions:
when a user visits a platform, an auction is run among interested advertisers to determine the ad
to be displayed to the user.
Despite of its simplicity,
first-price auctions are {\em not} incentive compatible ---
it is the most well-known example in auction theory that does not admit a truthful strategy.
This has led to significant effort in economics \cite{lebrun1996existence,lebrun1999first,maskin2000equilibrium,lizzeri2000uniqueness,athey2001single,maskin2003uniqueness,reny2004existence,lebrun2006uniqueness,bergemann2017first} and more recently, in computer science \cite{chawla2013auctions,wang2020bayesian,filos2021complexity}, to understand equilibria of first-price auctions.

In this paper we study the computational complexity of finding a Bayesian Nash equilibrium in a first-price auction.
We consider the following {\em independent common prior} setting.
There is one single item to sell, and $n$ bidders are interested in it. Each bidder has a continuous value distribution $\calD_i$ supported over $[0,1]$. 
The joint value distribution $\calD$ is the product   of $\calD_i$'s. 
While $\calD_i$'s are public, 
each bidder $i$ has a private value $v_i$ for the item drawn from $\calD_i$. 
Each bidder chooses a bidding strategy, which maps her private value to a bid from a {\em discrete} bid space $\mB = \{b_0,b_1 \ldots, b_m\}$. A Bayesian Nash equilibrium is a tuple of bidding strategies, one for each bidder, such that every bidder gets a best response to other bidders' strategies
(see formal definition in Section \ref{sec:pre}, including the two conditions that bidding strategies need to satisfy: no overbidding and monotonicity).

This game between bidders, however, is not fully specified without a tie-breaking rule: how the item is allocated when more than one bidder have the highest bid. 
A variety of tie-breaking rules have been considered in the literature.
The uniform tie-breaking rule, where the item is allocated to one of the winners uniformly at random, has been the most common offset. The other commonly used
tie-breaking rule is to  perform an additional round of Vickrey auction to ensure the existence of equilibria when the bidding space is continuous \cite{lebrun1996existence, deng2022efficiency}.
A recent line of works \cite{syrgkanis2014efficiency,hartline2014price,jin2022first,jin2023price} used monopoly tie-breaking rules that always give the item to one player  when establishing worst-case price-of-anarchy (POA) bounds.

To accommodate 
tie-breaking rules in the problem, we consider the setting where the auctioneer specifies a tie-breaking rule $\Gamma$  to be used in the auction (as part of the input). $\Gamma$  maps each $W\subseteq [n]$ as the set of winners to a distribution $\Gamma(W)$ over $W$ as the allocation of the item to bidders in $W$. While a general tie-breaking rule takes exponentially many entries to describe, our PPAD-hardness result is built upon the succinct family of so-called \emph{trilateral} tie-breaking rules: such a tie-breaking rule $\Gamma$  specifies item allocations when no more than three bidders are in tie, and follows the uniform tie-breaking rule otherwise (when more than three bidders are in tie).
The hardness result rules out the  possibility of an efficient algorithm for finding a Bayesian Nash equilibrium  in a first-price auction when the tie-breaking rule is given as part of the input (unless PPAD is in P). 
We compliment our hardness result with a polynomial time approximation scheme (PTAS) for finding a constant-approximate Bayesian Nash equilibrium under the uniform tie-breaking rule.


\subsection{Our results}

Our main hardness result shows that the problem of finding an $\eps$-approximate Bayesian Nash equilibrium in a first-price auction is PPAD-complete with trilateral tie-breaking.

\begin{restatable}[Computational hardness]{theorem}{Hardness}
\label{thm:hardness}
It is PPAD-complete to find an $\eps$-approximate Bayesian Nash equilibrium in a first-price auction under a trilateral tie-breaking rule for $\eps = 1/\poly(n)$.
\end{restatable}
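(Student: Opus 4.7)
The plan is to establish both directions of PPAD-completeness. The easier direction is PPAD-membership, which I would handle via a standard Brouwer-style fixed-point argument: once we fix the discrete bid space $\mB$, the space of bidding strategies (tuples of monotone, no-overbidding maps from $[0,1]$ to $\mB$, represented by their threshold/step structure) is finite-dimensional and compact, the best-response correspondence is upper-semicontinuous, and approximate equilibria correspond to approximate fixed points of a Kakutani-style map. Discretizing the strategy space at granularity $1/\poly(n)$ and invoking Scarf/Sperner-style reasoning places the problem in PPAD. The bulk of the work is the hardness direction.

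For the hardness direction I would reduce from the $\eps$-generalized circuit problem ($\eps$-GCircuit) of Rubinstein, which is PPAD-complete at inverse-polynomial precision. Given a circuit $\mathcal{C}$ with wires $w_1,\ldots,w_N \in [0,1]$ and gates of types $\{G_+, G_-, G_\times, G_{<}, G_{\zeta}, G_{\neg}\}$ (it suffices to take the standard minimal set), I would construct a first-price auction as follows. Introduce one ``variable bidder'' $B_i$ per wire $w_i$, each with a prior $\calD_i$ carefully designed so that, in any $\eps$-approximate Bayesian Nash equilibrium, some specific statistic of $B_i$'s bidding strategy (for instance, the probability that $B_i$ bids above a designated level $b^\star \in \mB$) encodes the wire value $w_i$ up to $O(\eps)$. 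The discrete bid space $\mB$ would contain a small set of ``anchor'' bids used by variable bidders and separate per-gate bids used by gadget bidders.

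The core of the construction is designing \emph{gate gadgets} that force the encoded wire values to satisfy each gate's constraint. Here is where the trilateral tie-breaking rule is essential: for each gate $g$ with input wires $w_i, w_j$ and output wire $w_k$, I would add a constant number of gadget bidders whose priors concentrate on values that make them bid exactly on a gate-specific price $b_g$ with high probability, so that three-way ties among the relevant bidders occur at $b_g$ with a controlled probability. By choosing $\Gamma$ on the relevant triples of bidders to favor one bidder over another by precisely tuned probabilities, the expected winning probability (and hence utility) of the relevant gadget/variable bidders picks up an affine/bilinear combination of the wire statistics. Setting the utility differences to zero at best response yields, after routine algebra, the desired gate equation $w_k \approx g(w_i,w_j) \pm O(\eps)$. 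On triples not touched by any gadget, $\Gamma$ defaults to uniform, and ties of size $\ge 4$ are automatically uniform, so $\Gamma$ is genuinely trilateral. Monotonicity and no-overbidding of the induced strategies must be checked per gadget and can be enforced by padding $\calD_i$'s support with ``dummy'' regions on which bids are forced by dominance.

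The main obstacle I expect is the \emph{consistency} and \emph{robustness} of the encoding across interacting gadgets. Each variable bidder participates in multiple gate gadgets simultaneously, so the equilibrium conditions produced by different gadgets must superimpose cleanly without creating spurious cross-terms. Handling this requires separating the ``interaction channels'' --- one per gate bid $b_g$ --- using disjoint regions of the prior supports, so that different gadgets act on disjoint value intervals of each variable bidder and their effects on marginal utility add independently. A related subtlety is propagating the $\eps$-approximation: a $\poly(n)$-approximate equilibrium in the auction must translate to an $\eps$-approximate GCircuit solution, which forces the multiplicative constants hidden in the gadget utilities to be bounded uniformly in $n$; I would control this by normalizing each gadget so that the probability mass involved in its three-way tie is $\Theta(1/\poly(n))$ but the induced utility sensitivities are $\Theta(1)$. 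Once the encoding, propagation of error, and the monotonicity/no-overbidding constraints are verified, combining the reduction with the PPAD-membership argument gives PPAD-completeness for $\eps = 1/\poly(n)$.
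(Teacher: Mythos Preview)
Your membership sketch is fine and matches what the paper does (it defers to the arguments of \cite{filos2021complexity}). The hardness plan, however, has a real gap.

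The architecture you propose---per-gate bids $b_g$, auxiliary gadget bidders that concentrate on $b_g$, and ``separating the interaction channels'' via disjoint value intervals---is essentially the \emph{subjective-prior} playbook, and it does not transfer cleanly to the common-prior setting. Under a common prior every bidder faces the same distribution of opposing bids, so a gadget bidder who places mass at $b_{g_2}$ depresses \emph{everyone's} winning probability at every lower bid, including the bids used by unrelated gadgets. Disjoint value supports do not help, because the coupling is through the bid space, not the value space. You gesture at controlling this by making the per-gadget mass $\Theta(1/\poly(n))$, which is indeed the right instinct, but once you do that the useful signal from each gadget is also $\Theta(1/\poly(n))$, and you now need the cross-gadget noise (which aggregates over $\Theta(n)$ gadgets) to be lower order. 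That does not fall out automatically from your description; making it work is the entire content of the reduction.

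The paper's construction is structurally different and confronts this issue head-on. It uses only three bids $b_0<b_1<b_2$ (no per-gate bids at all), puts $1-O(\delta)$ mass at $0$ for every player so that all allocation probabilities linearize to first order in $\delta$, and encodes wire $i$ by the \emph{threshold} $\tau_i$ at which player $i$ jumps from $b_1$ to $b_2$. After this linearization the jump point becomes an affine function of $(x_j)_{j\ne i}$ with coefficients determined by the bilateral tie-breaking matrix $\Sigma^A$. Here a second obstacle appears that your proposal does not anticipate: $\Sigma^A$ must satisfy $\Sigma^A+(\Sigma^A)^\top=J-I$, which forces the induced fixed-point operator to be too symmetric to simulate arbitrary $G_+$ gates. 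The paper breaks this symmetry by introducing a single \emph{pivot player} who bids $b_0$ and $b_2$ each with probability $1/2$; the three-way ties with the pivot bring in a second matrix $\Sigma^B$ (subject only to $\Sigma^B+(\Sigma^B)^\top\le J-I$), and it is this extra freedom that makes the gadget design go through. Your plan has no analogue of either the global linearization step or the pivot player, and without them the ``routine algebra'' you anticipate will not close.
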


It is worth pointing out that the hardness above holds even when (1) the bid space $\mathcal{B}$ has size $3$; and (2) the density function of each $\calD_i$ is a piecewise-constant function with no more than four nonzero pieces.

On the positive side, we obtain a PTAS for finding a Bayesian Nash equilibrium in a first-price auction under the uniform tie-breaking rule:

\begin{restatable}[PTAS under uniform tie-breaking]{theorem}{PTAS}
\label{thm:ptas}
For any $\eps > 0$, $n, m \geq 2$, there is an algorithm that finds an $\eps$-approximate Bayesian Nash equilibrium using $O(n^4 \cdot g(1/\eps))$ time under the uniform tie-breaking rule. 
\end{restatable}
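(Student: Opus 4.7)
The plan is to reduce equilibrium computation to an approximate fixed-point search over a discretized ``aggregate bid profile'' space, exploiting the fact that under uniform tie-breaking bidder $i$'s utility depends on opponents only through their marginal bid distributions. First, if $q_j^k = \Pr_{v_k \sim \calD_k}[s_k(v_k) = b_j]$ denotes bidder $k$'s marginal probability of bidding $b_j$, then bidder $i$'s winning probability with bid $b_j$ can be written as $W_j^i = \sum_{\ell=0}^{n-1} \frac{1}{\ell+1} B_{\ell,j}^i$, where $B_{\ell,j}^i$ is the probability that exactly $\ell$ opponents bid $b_j$ and all remaining opponents bid strictly less; each $W_j^i$ is computable in $O(nm)$ time by standard convolutions. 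Given $(W_j^i)_j$, bidder $i$'s best response is the pointwise argmax of $v_i \mapsto (v_i - b_j) W_j^i$ over $j$, which by the monotonicity condition is a monotone threshold strategy with at most $m$ thresholds.

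Next I would round each $q_j^i$ to a grid of precision $\delta = \poly(\eps/n)$ and argue that this perturbs each $W_j^i$ and hence every achievable utility by at most $O(\eps)$, so it suffices to look for an approximate equilibrium with discretized marginals. The algorithmic core, inspired by the Daskalakis--Papadimitriou PTAS for anonymous games, is to enumerate a small family of candidate aggregates by partitioning contributions at each bid level $b_j$ into ``small'' ones (summarized by their sum $\lambda_j$ and approximated by a Poisson law) and ``large'' ones (enumerated individually from a short list of candidate values). This produces at most $g(1/\eps)$ candidate aggregate profiles, independent of $n$. For each candidate, I would compute every bidder's best response against it and check whether the induced aggregate is $\eps$-close to the guess; any consistent candidate is output. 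With best-response computation and consistency checking taking $O(n^3)$ time per bidder, the total runtime is $O(n^4 \cdot g(1/\eps))$.

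The main obstacle will be handling the asymmetry of the value distributions $\calD_i$: standard anonymous-game PTASes assume symmetric players, whereas here each bidder has a distinct $\calD_i$ that shapes her best-response thresholds, so a reduction to a symmetric enumeration is not immediate. Controlling error propagation near ``knife-edge'' values $v_i$ where adjacent bid levels are nearly tied --- where small perturbations in $(W_j^i)$ could otherwise move thresholds discontinuously --- requires careful exploitation of the $\eps$-slack in the equilibrium definition. A secondary technical point is choosing the large/small cutoff and the Poisson-approximation accuracy so that the total candidate-aggregate count stays $g(1/\eps)$ independent of $n$, even with heterogeneous bidders.
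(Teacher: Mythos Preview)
Your outline has the right overall shape (discretize marginals, enumerate a small family of aggregate profiles, verify consistency), and it correctly identifies the anonymous-game connection. But there are two genuine gaps and one significant methodological divergence from the paper.

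\textbf{Missing bid-space reduction.} The stated runtime is $O(n^4\cdot g(1/\eps))$ with no dependence on $m$, yet your plan never reduces the number of bid levels. Your best-response computation, your threshold strategies, and your aggregate enumeration all carry an $m$-dependence; in particular, the number of candidate aggregate profiles you would enumerate grows at least exponentially in the number of bid levels, not in $1/\eps$. The paper's first step (Lemma~\ref{lem:ptas-step1}) shows that rounding $\mB$ to $\mB_\eps$ of size $O(1/\eps)$ loses only $O(\eps)$ in equilibrium quality; without an analogous step your enumeration cannot be bounded by $g(1/\eps)$.

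\textbf{The asymmetry problem is not just an obstacle---it is the core difficulty.} You correctly flag that heterogeneous $\calD_i$'s break the usual anonymous-game symmetry, but you leave the resolution open. Simply checking whether ``the induced aggregate is $\eps$-close to the guess'' does not work: given a candidate aggregate, the collection of best responses of the $n$ bidders need not \emph{reassemble} into that aggregate, and there is no canonical bidder-to-strategy assignment. The paper handles this with a bipartite matching: for each candidate profile $(p_1,\ldots,p_n)\in\sE_\omega$ it builds a bipartite graph between bidders and the $p_r$'s (edge $(i,r)$ present iff $p_r$ is an $\eps$-approximate best response for bidder $i$ with small overbidding) and searches for a perfect matching. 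This is the step that actually resolves the asymmetry and yields the $n^4$ factor.

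\textbf{Different enumeration strategy.} Your Poisson/large-small decomposition is the generic anonymous-game tool; the paper instead exploits a first-price-specific observation that makes the search space much simpler. If the total probability mass placed at or above some bid level $b_{m-j^*}$ exceeds $10/\eps$, then any bid below $b_{m-j^*}$ wins with probability at most $\eps/4$ and can be replaced by bidding $0$ without harm. Thus one may restrict to profiles where only the top few bid levels carry mass, with the total mass at each such level at most $O(1/\eps)$; this directly gives a search space of size $2^{\tilde{O}(m/\eps\omega)}$ (with $m=O(1/\eps)$ after Step~1) without any Poisson approximation. The Daskalakis--Papadimitriou result is used in the paper, but only in Step~3 to certify \emph{existence} of a discretized equilibrium on a grid of width independent of $n$, not to enumerate aggregates.
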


Our algorithm works as long as it has oracle access to the CDF of each value distribution $\calD_i$.

\subsection{Related work}

\paragraph{First-price auction}
The study of first-price auction dates back to the seminal work of Vickrey \cite{vickrey1961counterspeculation} in 1960s.
Despite its extremely simple form and a wide range of applications, the incentive has been a central issue and it is perhaps the most well known mechanism that does not admit a truthful strategy. 
A long line of works in the economic literature \cite{riley1981optimal,plum1992characterization,marshall1994numerical,lebrun1996existence,lebrun1999first,maskin2000equilibrium,lizzeri2000uniqueness,athey2001single,maskin2003uniqueness,reny2004existence,lebrun2006uniqueness,chawla2013auctions,bergemann2017first} devote to characterizing the existence, uniqueness and closed-form expression of a pure Bayesian Nash equilibrium (or BNE).
However, the BNE of first-price auction is only well-understood in a few special cases, including when the players have symmetric valuation distributions \cite{chawla2013auctions}, when all players have probability density function bounded above $0$ and atomic probability mass at the lowest points \cite{lebrun2006uniqueness},  when there are only two bidders with uniform valuation distributions \cite{kaplan2012asymmetric} or when the players have discrete value and continuous bidding space and the tie-breaking is performed with an extra round of Vickrey (second-price) auction \cite{wang2020bayesian}. 

A formal study on the computational complexity of equilibria in a first-price auction has been raised by the recent work of \cite{filos2021complexity}, which is most closest to us. \cite{filos2021complexity} examines the computation complexity under a {\em subjective prior}, that is, each bidder has a different belief of other's valuation distribution. They prove the PPAD-completeness and the FIXP-completeness of finding an $\eps$-BNE (for some constant $\eps > 0$) and an exact BNE, under the uniform tie-breaking rule. As we shall explain soon, the techniques to obtain their results are quite different from us. 
It is worth noting that most aforementioned literature are on the common prior setup, and \cite{filos2021complexity} also leaves an open question of characterizing the computational complexity of $\eps$-BNE under the standard setting of independent common prior.
\cite{filos2021complexity} also provides a polynomial time algorithm for finding a high precision BNE for {\em constant} number of players and bids, when the input distribution are piecewise polynomial. Their approach is based on polynomial system solvers and thus different from us. 
The work of \cite{cai2014simultaneous} studies the Bayesian combinatorial auctions, where there are multi-items to sell for multiple bidders. They prove the complexity of Bayesian Nash equilibrium is at least PP-hard (a complexity class between the polynomial hierarchy and PSPACE), the model is quite different, because the agents' valuation could be much more complex, defining over subsets of items.

Other aspects of first-price auction have also been studied in the literature, including the price of anarchy/stability \cite{syrgkanis2013composable,syrgkanis2014efficiency,feldman2013simultaneous,hoy2018tighter,jin2022first,jin2023price} and parameter estimation \cite{guerre2000optimal,cherapanamjeri2022estimation}.


\paragraph{Equilibrium computation}

The complexity class of PPAD (Polynomial Parity Arguments on Directed graphs) was first introduced by Papadimitriou~\cite{papadimitriou1994complexity} to capture one particular genre of total search functions. 
The seminal work \cite{daskalakis2009complexity,chen2009settling} established the PPAD-hardness of normal-form games. The hardness of approximation was settled by subsequent work~\cite{rubinstein2018inapproximability,rubinstein2016settling, deligaks2022pure} in the past few years. 
A broad range of problems have been proved to be PPAD-hard, and notable examples including  equilibrium computation in special but important class of games (win-or-lose game \cite{abbott2005complexity, chen2007approximation}, anonymous game \cite{chen2015complexity}, constant rank game \cite {mehta2014constant}, graphical game \cite{papadimitriou2021public}),
market equilibrium (Arrow-Debreu market~\cite{chen2009spending, chen2009settling-market, vazirani2011market}, non-monotone market \cite{chen2013complexity,rubinstein2019hardness}, Hylland-Zeckhauser scheme \cite{chen2022computational}),
fair division \cite{othman2016complexity, chaudhury2021competitive}, min-max optimization~\cite{daskalakis2021complexity} 
and reinforcement learning \cite{daskalakis2022complexity,jin2022complexity}.

The PTAS is known for anonymous game \cite{daskalakis2015approximate}, which is closely related to our work. 
The \cite{daskalakis2015approximate} presented a $n^{g(m, 1/\eps)} \cdot U$ algorithm for $m$-action $n$-player anonymous games for some exponential function $g$. Here $U$ denotes the number of bits to represent a payoff value in the game. 
Instead, our algorithm finds an $\eps$-BNE of first-price auction with running time $n^4 \cdot g(1/\eps)$, which does not depend on the size of bidding space and the bit-size of the representation of the distributions.
It crucially utilizes the structure of first-price auction in the rounding and searching step, and could have a broader application in auction theory.

\subsection{Technical overview}
The challenge of obtaining the PPAD-hardness arises from two folds.
First, the utility function does not admit a closed-form expression, in terms of other player's strategy. It depends on an exponential number of possible bidding profiles and is computed only via a dynamic programming approach.
Second, the game structure is highly symmetric under the (independent) common prior.
In a first-price auction, the allocation is determined by the entire bidding profile, and each player faces ``almost'' the same set of profile. 
From this perspective, it is more like an anonymous game. 
Perhaps even worse, in an anonymous game, the utility function of each player is different and could be designed for the sake of reduction.
While in a first-price auction, the utility function of each player is the same, and depends only on the allocation probability.
Of course, the general (non-uniform) tie-breaking rule as well as the different valuation distributions could be used for breaking the symmetry.
We note the above challenges are unique to the common prior setting.
In a sharp contrast, in the subjective prior setting \cite{filos2021complexity}, the players' subjective belief could be different. A player could presume most other players
have zero value and submit zero bid, hence, the game is {\em local} and non-symmetric.

To resolve the above challenges, our key ideas are (1) linearizing the allocation probability and expanding a first order approximation of the utility function; and (2) carefully incorporating a (simple) general tie-breaking rule to break the symmetry.

\paragraph{Technical highlight: Linearizing ``everything''}
Given a strategy profile $s$, the distribution over the entire bidding profile (and therefore the allocation probability, the utility, the best response) could be complicated to compute, especially when multiple players submit the highest bid.
To circumvent this issue, we assign a large probability $(1-\delta)$ around value $0$ for all players, for some polynomially small $\delta > 0$.\footnote{This is the reason that our hardness result only applies for (inverse) polynomially small $\eps$.}
By doing this, the probability that a player bids nonzero is small, so one can ignore higher order term.
Concretely, let $p_{i, j}$ be the probability that player $i$ gets the item when bidding $b_j$ given that the other players have strategy $s_{-i}$, and let $\Gamma_i(b_j, s_{-i})$ be the allocation for player $i$ of bidding $b_j$ given other player's strategy $s_{-i}$. 
The immediate advantage is that the allocation probability can be approximated as 
\begin{align}
\Gamma_i(b_j,s_{-i}) \approx (1 - \sum_{i'\in [n]\setminus \{i\}}\sum_{j' > j}p_{i', j'}) + \sum_{i'\in [n]\setminus \{i\}} \Sigma_{i, i'} \cdot  p_{i', j}\label{eq:tie2}
\end{align}
under a {\em bilateral} tie-breaking rule. 
Here $\Sigma \in [0,1]^{n\times n}$ specifies the allocation when there is a tie between a pair of players $(i_1, i_2)$ and satisfies $\Sigma + \Sigma^{\top} = (J - I)$.
At this stage,
it is tempting to use $p_{i, j}$ to encode variables of a generalized circuit problem and the choice of best response to encode constraints. 
In our final construction, we only need three bids $0 = b_{0} < b_1 < b_2$ and the variables are encoded by the jump point $\tau_{i}$ between $b_{1}, b_{2}$ (i.e., when player $i$ bids $b_2$ instead of $b_1$), which has the closed-form expression of
\begin{align}
    \tau_{i} = b_{2} + \frac{\Gamma_i(b_{1}, s_{-i})\cdot (b_{2} - b_1)}{\Gamma_i(b_{2}, s_{-i}) - \Gamma_i(b_{1}, s_{-i})}.\label{eq:tie-4}
\end{align}
Even after the linearization step of Eq.~\eqref{eq:tie2}, the above expression is still quite formidable to handle. 
Our next idea is to restrict the jumping point in a small interval between $(b_2,1)$, and assign only a small total probability mass of $\beta\delta$ over the interval, here $\beta$ is another polynomially small value. There is a (fixed) probability mass of $\delta$ around $b_2$ and $1$. 
One can further perform a first order approximation to Eq.~\eqref{eq:tie-4}, and again linearize the jumping point expression.

\paragraph{Incorporating tie-breaking rule}
Abstracting away some construction details, the above construction reduces the first-price auction from a fix point problem, obeys the following form
\begin{align}
\vec{p} = f(G \vec{p}) \quad \text{where} \quad G = 2\Sigma^{A} - J + I,\label{eq:fix8}
\end{align}
where $\vec{p}$ is the probability of bidding $b_1$ (inside the small interval), $f = (f_1, \ldots, f_n)$ is operated coordinate-wise over $G\vec{p}$, $f_i$ is a monotone function maps from $[a_i,b_i]$ (some fixed interval) to $[0, 1]$, $J$ is the all $1$ matrix and $I$ is the identity matrix.
The fixed point problem is fairly general and subsumes the generalized circuit problem, {\em if} $\Sigma^A$ is an arbitrary matrix in $[0,1]^{n\times n}$. 
Unfortunately, it is not true due to the constraint of $\Sigma^A + (\Sigma^{A})^{\top} = (J- I)$. 
We resolve the issue by adding an extra pivot player. 
The pivot player is guaranteed to bid $b_0 = 0$ and $b_2$ with equal probability of $1/2$. 
From a high level, the pivot player splits the equilibrium computation into two cases,  the case when it bids $b_0$ is similar,  while the case of  bidding $b_2$ introduces another tie-breaking matrix $\Sigma^{B} \in [0,1]^{n \times n}$ among the original players in $[n]$ (hence it becomes a trilateral rule). 
It transforms the fix point problem (i.e., Eq.~\eqref{eq:fix8}) to a more convenient form 
\begin{align}
\vec{p} = f(G' \vec{p}) \quad \text{where} \quad G' = 2\Sigma^{A} + \Sigma^{B} - J + I,\label{eq:fix9}
\end{align}
and one can construct gadgets to reduce from the generalized circuit problem. 
The last step is fairly common and details can be found in Section \ref{sec:ppad}.


\section{Preliminary}
\label{sec:pre}

\paragraph{Notation.} We write $[n]$ to denote $\{1,2,  \ldots, n\}$ and $[n_1:n_2]$ to denote $ \{n_1, n_1+1, \ldots, n_2\}$. Let $\mathsf{1}_{i}$ be an indicator vector -- it equals the all $0$ vector, except the $i$-th coordinate which equals $1$.
Let $\Delta_n$ contains all probability distribution over $[n]$. Given a vector $v \in \R^n$, and an index $i \in [n]$, $v_i$ denotes the $i$-th entry of $v$ while $v_{-i}$ denotes $(v_1, v_2, \ldots, v_{i-1}, v_{i+1}, \ldots, v_{n})$, i.e., all entries except the $i$-th coordinate. We write $x = y \pm \eps$ if $x \in [y-\eps, y+\eps]$. 
Let $J_n \in \R^{n\times n}$ be the $n\times n$ all-$1$ matrix and $I_n$ be the $n\times n$ identity matrix.

\subsection{Model}
In a Bayesian first-price auction (FPA), there is one single item to sell and it is specified by a tuple $(\mN, \mB, \D, \Gamma)$, where $\mN = [n]$ is the set of players, $\mB$ is the bid space, $\D$ is the value distribution and $\Gamma$ is the tie-breaking rule.
For each play $i \in \mN$, it has a private value $v_i$ of the item that is drawn from a 
(continuous) distribution $\D_{i}$ supported over $[0, 1]$ (written as $v_i \sim \D_{i}$).
We consider the standard {\em independent common prior} setting --- the joint value distribution $\D = \D_1 \times \cdots \times \D_{n}$ is the product distribution of $\{\D_i\}_{i \in [n]}$ and we assume the value profile $v = (v_1, \ldots, v_{n}) \in [0,1]^{n}$ is drawn from $\D$.
Let $\mB = \{b_0, b_1, \ldots, b_m\} \subset  [0, 1]$ be the bid space, where $0 = b_0 < b_1 < \cdots < b_m \leq 1$. 

In a first-price (sealed-bid) auction, each bidder $i$ submits a bid $\beta_i \in \mB$ simultaneously to the seller. 
The seller assigns the item to the winning player $i^{*}$ which submits the highest bid, and charges $i^{*}$ a payment equals to its bid $\beta_{i^{*}}$.

\paragraph{Allocation and tie-breaking rule.} When there are multiple players submitting the same highest bid, the seller assigns and charges the item to one of those winning players, following a pre-described {\em tie-breaking} rule $\Gamma$. 
A tie breaking rule $\Gamma: \{0, 1\}^n \rightarrow \Delta_n$ maps a set of winning players $W \subseteq [n]$ to an allocation profile $\Gamma(W) \in \Delta_n$ supported on $W$ that specifies the winning probability of each player $i \in W$ as $\Gamma_i(W)$.
Formally, given a bidding profile $\beta \in \mB^n$, the set of winning players $W(\beta)$ are those who submit the highest bids \[
W(\beta) = \left\{i  \in [n]: \beta_{i } = \max_{j\in [n]}\beta_{j}\right\}.
\]
The tie breaking rule $\Gamma(W(\beta)) \in \Delta_n$ specifies the winning probability of each player in $W(\beta)$ and $\Gamma_i(W(\beta))$ is the probability that the bidder $i$ obtains the item under the bidding profile $\beta$.
The tie-breaking rule needs to satisfy (1) $\Gamma_i(W(b)) > 0$ only if $i \in W(\beta)$, i.e., the item is assigned only to  players with the highest bid; and (2) $\sum_{i\in [n]}\Gamma_i(W(\beta)) = 1$, i.e., the total allocation is $1$.
When there is no confusion, we also abbreviate $\Gamma(\beta) = \Gamma(W(\beta))$.

It is known that the tie-breaking rule plays a subtle yet critical rule on the equilibrium of Bayesian FPA. 
Our hardness result is built upon the {\em trilateral} tie-breaking rule, a simple generalization of the commonly used uniform tie-breaking method.
\begin{definition}[Trilateral tie-breaking]
A \emph{trilateral} tie-breaking rule $\Gamma$ is 
  specified by the following tuples of nonnegative numbers 
$$\big(w_{i,j}:1\le i<j\le n\big)\quad \text{and}\quad
\left(\sigma_{i,j,k}^{(1)},\sigma_{i,j,k}^{(2)}:1\le i<j<k\le n\right)
  $$
such that $w_{i,j}\le 1$ and $\sigma_{i,j,k}^{(1)}+\sigma_{i,j,k}^{(2)}\le 1$.
Given a bidding profile $\beta\in \mB^n$ and the winning set $W(\beta)$, the item is distributed according to $\Gamma$ as follows
\begin{flushleft}\begin{enumerate}
    \item If $W(\beta)=\{i\}$ for some $i\in [n]$, then $\Gamma_i(\beta)=1$;
    \item If $W(\beta)=\{i,j\}$ for some $1\le i<j\le n$, then $\Gamma_i(\beta)=w_{i,j}$ and $\Gamma_j(\beta)=1-w_{i,j}$;
    \item If $W(\beta)=\{i,j,k\}$ for some $1\le i<j<k\le n$, then $\Gamma_i(\beta)=\sigma_{i,j,k}^{(1)}$,
    $\Gamma_j(\beta)=\sigma_{i,j,k}^{(2)}$ and 
    $\Gamma_k(\beta)=1-\sigma_{i,j,k}^{(1)}-\sigma_{i,j,k}^{(2)}$; and 
    \item When $| W(\beta)|\ge 4$, the item is evenly distributed among players in $W(\beta)$. \footnote{We note our hardness result actually holds regardless of the tie-breaking rule among more than $3$ players (i.e., not necessarily uniform).}
\end{enumerate}\end{flushleft}

\end{definition}


\paragraph{Equilibrium and strategy}
Given a tie-breaking rule $\Gamma$ and a bidding profile $\beta = (\beta_1, \ldots, \beta_n)$, the {\em ex-post} utility of a bidder $i$ is given by 
\begin{align*}
    u_i(v_i; \beta_i, \beta_{-i}) = (v_i - \beta_i) \cdot \Gamma_i(\beta).
\end{align*}
A strategy $s_i: [0, 1] \rightarrow \mB$ of player $i$ is a map from her (private) value $v_i$ to a bid $s(v_i) \in \mB$, with the following two properties:
\begin{itemize}
    \item {\bf No overbidding}. A player never submits a bid larger than her private value, i.e., $s_i(v_i) \leq v_i$ for all $v_i \in [0,1]$.
    \item {\bf Monotonicity.}  $s_i$ is a non-decreasing function.
\end{itemize}
These are common assumptions in the literature of first-price auction \cite{maskin2003uniqueness,lebrun2006uniqueness,filos2021complexity} and they rule out spurious equilibria in Bayesian auctions \cite{cai2014simultaneous}. 
Due to the monotonicity assumption, one can write a strategy $s_i$ as $m$ thresholds $0 \leq \tau_{i,1} \leq \cdots \leq \tau_{i, m} \leq 1$, where the player $i$  bids $b_j$ in the interval $(\tau_{i, j}, \tau_{i, j+ 1}]$\footnote{If the valuation distribution contains a point mass, then the strategy might be randomized at the point mass.}. Here we set by default $\tau_{i, 0}=0$ and $\tau_{i, m+1} = 1$.

The $\eps$-approximate Bayesian Nash equilibrium ($\eps$-approximate BNE) of FPA is defined as follow.
\begin{definition}[$\eps$-approximate Bayesian Nash equilibrium]
Let $n, m \geq 2$. Given a first-price auction ($\mN, \mB, \D,\Gamma$), a strategy profile $s = (s_1, \ldots, s_{n})$ is an $\eps$-approximate Bayesian Nash equilibrium ($\eps$-approximate BNE) if for any player $i\in [n]$, we have
\begin{align*}
    \E_{v \sim \D}\big[u_i(v_i; s_i(v_i), s_{-i}(v_{-i}))\big] \geq \E_{v \sim \D}\big[u_i(v_i; \bs(v_i, s_{-i}), s_{-i}(v_{-i}))\big] - \eps, 
\end{align*}
where $\bs(v_i, s_{-i}) \in \mB$ is the best response of player $i$ given other players' strategy $s_{-i}$, i.e.
\begin{align*}
\bs(v_i, s_{-i})\in \arg\max_{b\in \mB} \E_{v_{-i}\sim \D_{-i}}\big[u_i(v_i; b, s_{-i}(v_{-i}))\big].
\end{align*}
\end{definition}

The existence and the PPAD membership of finding a $1/\poly(n)$-approximate BNE can be established via a similar approach of \cite{filos2021complexity} (In particular, Theorem 4.1 and Theorem 4.4 of \cite{filos2021complexity}), and we omit the standard proof here.

We shall also use another notion of equilibrium which is more convenient in our hardness reduction. The $\eps$-approximately well-supported Bayesian Nash equilibrium ($\eps$-BNE) is defined as\footnote{We note the $\eps$-approximate BNE is known also {\em ex-ante} approximate BNE, and the $\eps$-BNE is known as {\em ex-interim} approximate BNE in some of the literature.}
\begin{definition}[$\eps$-approximately well-supported Bayesian Nash equilibrium]
Let $n, m \geq 2$. Given a first-price auction ($\mN, \mB, \D,\Gamma$), a strategy profile $s = (s_1, \ldots, s_{n})$ is an $\eps$-approximately well-supported Bayesian Nash equilibrium ($\eps$-BNE) if for any player $i\in [n]$ and $v_i \in [0, 1]$, we have
\begin{align*}
    \E_{v_{-i} \sim \D_{-i}}\big[u_i(v_i; s_i(v_i), s_{-i}(v_{-i}))\big] \geq \E_{v_{-i}\sim \D_{-i}}\big[u_i(v_i; \bs(v_i, s_{-i}), s_{-i}(v_{-i}))\big] - \eps. 
\end{align*}
\end{definition}

The notion of $\eps$-BNE and $\eps$-approximate BNE can be reduced to each other in polynomial time, losing at most a polynomial factor of precision. 
It is clear that an $\eps$-BNE is also an $\eps$-approximate BNE. Lemma \ref{lem:notion} states the other direction and the proof is deferred to the appendix.
\begin{lemma}\label{lem:notion}
Given a first-price auction ($\mN, \mB, \D,\Gamma$) and an $\eps$-approximate BNE $s$, there is a polynomial time algorithm that maps $s$ to an $\eps'$-BNE, where $\eps' = (2n+10)\sqrt{\eps}$. 
\end{lemma}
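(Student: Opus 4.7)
The plan is to post-process each player's strategy $s_i$ via a mild, threshold-level shift, producing a monotone profile $s'$ that is pointwise near-optimal. The argument loses a $\sqrt{\epsilon}$-factor from Markov's inequality on the pointwise regret and a linear-in-$n$ factor from the ripple that one player's modification has on the other players' winning probabilities.

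First, for each player $i$ define the pointwise regret
\[
r_i(v_i):=\max_{b\in\mB}\E_{v_{-i}}[u_i(v_i;b,s_{-i}(v_{-i}))]-\E_{v_{-i}}[u_i(v_i;s_i(v_i),s_{-i}(v_{-i}))]\ge 0.
\]
The $\epsilon$-approximate BNE hypothesis gives $\E_{v_i\sim\D_i}[r_i(v_i)]\le\epsilon$, so by Markov's inequality the bad set $B_i:=\{v_i:r_i(v_i)\ge\sqrt{\epsilon}\}$ satisfies $\D_i(B_i)\le\sqrt{\epsilon}$.

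Second, write $p_i(b;s_{-i})$ for the probability that bid $b$ wins against $s_{-i}$ (with tie-breaking folded in). Since the expected utility $(v_i-b)\,p_i(b;s_{-i})$ is affine in $v_i$ for each $b$, the best response $s_i^*(v_i):=\bs(v_i,s_{-i})$ is a monotone step function whose threshold between bids $b_{k-1}$ and $b_k$ is
\[
\tau_{i,k}^*=\frac{b_k\,p_i(b_k;s_{-i})-b_{k-1}\,p_i(b_{k-1};s_{-i})}{p_i(b_k;s_{-i})-p_i(b_{k-1};s_{-i})}
\]
(or $1$ if the denominator is non-positive). Letting $\tau_{i,k}$ denote $s_i$'s own thresholds, the pointwise regret on the interval between $\tau_{i,k}$ and $\tau_{i,k}^*$ grows linearly as $\Delta_k\cdot|v_i-\tau_{i,k}^*|$, where $\Delta_k:=|p_i(b_k;s_{-i})-p_i(b_{k-1};s_{-i})|$. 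I then define $s_i'$ via the shifted thresholds
\[
\tau_{i,k}':=\tau_{i,k}^*+\sign(\tau_{i,k}-\tau_{i,k}^*)\cdot\frac{\sqrt{\epsilon}}{\Delta_k},
\]
clipped to stay monotone in $k$. By the linear-regret identity, $\{v_i:s_i'(v_i)\neq s_i(v_i)\}\subseteq B_i$, so $\D_i(\{s_i'\neq s_i\})\le\sqrt{\epsilon}$; and on $B_i$ we have $s_i'(v_i)=s_i^*(v_i)$, a best response against the original $s_{-i}$.

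Finally, since each $s_i'$ differs from $s_i$ on a $\sqrt{\epsilon}$-measure set, for any player $j$ and any bid $b$ the winning probability $p_j(b;s_{-j}')$ differs from $p_j(b;s_{-j})$ by at most $\sum_{i\neq j}\D_i(\{s_i\neq s_i'\})\le(n-1)\sqrt{\epsilon}$; as utilities lie in $[0,1]$, expected utilities at any value therefore shift by at most $(n-1)\sqrt{\epsilon}$ between the two profiles. For $v_i\in B_i$, $s_i'(v_i)$ has zero regret against the old $s_{-i}$ and hence at most $2(n-1)\sqrt{\epsilon}$ against $s_{-i}'$; for $v_i\notin B_i$, $s_i'(v_i)\in\{s_i(v_i),s_i^*(v_i)\}$ has regret at most $\sqrt{\epsilon}$ against the old $s_{-i}$ and hence at most $(2n-1)\sqrt{\epsilon}$ against $s_{-i}'$. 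Both bounds sit comfortably inside the claimed $(2n+10)\sqrt{\epsilon}$. The main obstacle I anticipate is the clipping step: the raw targets $\tau_{i,k}^*\pm\sqrt{\epsilon}/\Delta_k$ need not form a monotone sequence in $k$, so clipping could push some $\tau_{i,k}'$ outside its $\sqrt{\epsilon}$-window and let $\{s_i'\neq s_i\}$ leak beyond $B_i$. Because both $(\tau_{i,k})_k$ and $(\tau_{i,k}^*)_k$ are monotone to begin with, the clipping correction is mild, and any incidental $O(\sqrt{\epsilon})$ slack is absorbed by the additive $10\sqrt{\epsilon}$ room in the final bound.
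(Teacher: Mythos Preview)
Your high-level plan---Markov on the pointwise regret to isolate a $\sqrt{\epsilon}$-measure bad set, modify each $s_i$ on a set of measure at most $\sqrt{\epsilon}$ so that it is pointwise near-optimal against the old $s_{-i}$, then absorb the $O(n\sqrt{\epsilon})$ shift in the other players' winning probabilities---is exactly the paper's plan. The paper establishes precisely the two properties you isolate, namely $\Pr_{v_i}[s_i'(v_i)\ne s_i(v_i)]\le\sqrt{\epsilon}$ and pointwise regret at most $O(\sqrt{\epsilon})$ against the old $s_{-i}$, and then deduces the $(2n+10)\sqrt{\epsilon}$ bound just as you do.

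The gap is in your construction of $s_i'$. Several steps do not go through as written. First, the ``linear-regret identity'' $r_i(v_i)=\Delta_k|v_i-\tau_{i,k}^*|$ is only valid locally and only for the comparison between two adjacent bids; the true regret is a maximum over all bids, and some $b_k$ are never best responses so their $\tau_{i,k}^*$ are not meaningful. Second, your formula $\tau_{i,k}'=\tau_{i,k}^*\pm\sqrt{\epsilon}/\Delta_k$ can overshoot $\tau_{i,k}$ whenever $\tau_{i,k}$ is already within $\sqrt{\epsilon}/\Delta_k$ of $\tau_{i,k}^*$, and then $s_i'\ne s_i$ at values where $s_i$ was already near-optimal, i.e.\ outside $B_i$. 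Third---and you flag this yourself---the unclipped $\tau_{i,k}'$ need not be monotone in $k$, and when some $\Delta_k$ is tiny the shift $\sqrt{\epsilon}/\Delta_k$ is enormous, so the clipping is \emph{not} mild; after clipping, the set $\{s_i'\ne s_i\}$ can have measure much larger than $\sqrt{\epsilon}$. That breaks the $(n-1)\sqrt{\epsilon}$ TV bound in your last paragraph, and the extra $10\sqrt{\epsilon}$ room cannot save you because the failure is in the \emph{measure} bound, not in the utility bound.

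The paper sidesteps all of this by not working at the threshold level. It sweeps $v$ from $1$ down to $0$ and decides $s_i'(v)$ by a case analysis: if $s_i(v)$ is already a $4\sqrt{\epsilon}$-best response and does not exceed the smallest bid $b^+(v)$ already committed to above $v$, keep it; otherwise take the largest $5\sqrt{\epsilon}$-best response not exceeding $b^+(v)$; if even that set is empty, set $s_i'(v)=b^+(v)$. Monotonicity is enforced by the sweep itself, so no clipping is needed. The pointwise-regret property is then obtained from the elementary fact that $v\mapsto u_i(v;b,s_{-i})-u_i(v;b',s_{-i})$ is monotone in $v$ whenever $b>b'$, and the small-change property is obtained by showing that for every $v\notin V_i:=\{v:s_i(v)\notin \bs_{\sqrt{\epsilon}}(v,s_{-i})\}$ the case analysis lands in the ``keep $s_i(v)$'' branch. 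This is precisely the invariant your threshold shift was trying to encode, but the top-down sweep avoids the monotonicity-clipping problem entirely.
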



\section{PPAD-hardness}
\label{sec:ppad}


Recall our main hardness result

\Hardness*

In the rest of section, we construct the hard instances of FPA in Section \ref{sec:construction} and provide some basic properties in Section \ref{sec:basic}. 
We reduce from the $\eps$-generalized-circuit problem in Section \ref{sec:reduce}.

\subsection{Construction of first price auctions}
\label{sec:construction}

It suffices to prove finding $\eps$-BNE is hard for some $\eps = 1/\poly(n)$ due to Lemma \ref{lem:notion}.
We will use the following three parameters in the construction:
\begin{align*}
\eps = \frac{1}{n^{40}}, \quad \delta = \frac{1}{n^{10}} \quad \text{and} \quad \beta = \frac{1}{n^4}.
\end{align*}

We describe the bidding space $\mB$, the valuation distribution $\D$ and the tie-breaking rule $\Gamma$.

\paragraph{Bidding space.} The bidding space $\mB = \{b_0,b_1, b_2\}$ contains $3$ bids in total, where $b_0 = 0$, 
$$b_1 = \frac{\delta^2}{n^4}\quad\text{and}\quad b_2 = \frac{\delta}{n^2}.$$
\paragraph{Valuation distribution.} 
There are $ n+1$ players --- $n$ standard players indexed by $[n]$ and one pivot player $n+1$. We will describe  the value distribution $\D_i$ of player $i$ by specifying its density function $p_i: [0,1]\rightarrow \R^{+}$. 
The density function $p_{n+1}$ of the pivot player is set as follows:
\begin{align*}
    p_{n+1}(v) = 
    \begin{cases}
    1/(2\eps) & v \in [0,\eps]\\
    1/(2\eps) & v \in [1-\eps, 1]
    \end{cases}
   .
\end{align*}
In another word, $\D_{n+1}$ has $0.5$ probability mass around $0$ and $0.5$ probability mass around $1$. 

The density function $p_i$ of each standard player $i \in [n]$ is set as follows:
\begin{align*}
    p_i(v) =  
    \begin{cases}
    (1 - (2 + \beta)\delta)/\eps & v \in [0,\eps]\\
    \delta /\eps & v \in [b_2 - \eps, b_2]\\
    \tilde{p}_i(v) & v \in (b_2, 1 - \eps)\\
    \delta/\eps & v \in [1- \eps, 1] 
    \end{cases}
\end{align*} 
where $\tilde{p_i}(v)$ is defined over $(b_2,1-\eps)$, satisfies $\int_{b_2}^{1-\eps}\tilde{p}_i(v) \mathsf{d} v = \beta \delta$, but will be specified later in the reduction in Section \ref{sec:reduce}.
In short, a standard player $i$ has most its probability mass around $0$, $\delta$ mass around $b_2$, $\delta$ mass around $1$ and $\beta\delta$ mass in $(b_{2}, 1-\eps)$
  to be specified later.

\paragraph{Tie-breaking rule.}
We describe the trilateral tie-breaking rule $\Gamma$ as follows. For any bidding profile $\beta$ with $2\le |W(\beta)|\le 3$, the tie-breaking rule depends on the presence of $n+1$ in $W(\beta)$:
\begin{itemize}
\item Suppose $n+1 \notin W(\beta)$. Then
\begin{flushleft}\begin{itemize}
\item If $|W(\beta)| = 2$, i.e., $W(b) = \{i, j\}$, the tie-breaking rule is given by a matrix $\Sigma^{A} \in [0,1]^{n\times n}$ such that player $i$ obtains $\Sigma^{A}_{i, j}$ unit of the item and player $j$ obtains 
$\Sigma^{A}_{j,i}$ unit. 
So the matrix $\Sigma^A$ needs to satisfy $\Sigma^A + (\Sigma^A)^{\top} = (J_n - I_n)$. 
We will specify $\Sigma^A$ in the reduction later but will  guarantee that all of its off-diagonal entries lie in $[1/4,3/4]$.
\item If $|W(\beta)| = 3$, then we use the uniform allocation.
\end{itemize}\end{flushleft}
\item Suppose  $n+1 \in W(\beta)$. Then
\begin{flushleft}\begin{itemize}
\item If $|W(\beta)| = 2$, then the item is fully allocated to the pivot player $n+1$.
\item If $|W(\beta)| = 3$, i.e., $W(b) = \{i, j, n+1\}$, then the tie breaking is given by a matrix $\Sigma_B \in [0,1]^{n\times n}$  such that player $i$ obtains $\Sigma^{B}_{ i, j}$ unit of the item, player $j$ obtains $\Sigma^B_{j,i}$ unit and player $n+1$ obtains $1-\Sigma^B_{i,j}-\Sigma^B_{j,i}$ unit. So the matrix $\Sigma^B$ needs to satisfy $\Sigma^B + (\Sigma^B)^{\top} \leq J_n - I_n$, i.e., $\Sigma^B + (\Sigma^B)^{\top}$ is entrywise dominated by $J_n - I_n$.
\end{itemize}\end{flushleft}
\end{itemize}

\subsection{Basic properties}\label{sec:basic}

Let $s=(s_1,\ldots,s_{n+1})$ be an $\eps$-BNE of the instance.
We prove a few properties of $s$ in this subsection.
Given  $s$, for each player $i$ we define $f_i: \mB\rightarrow [0,1]$ and $F_i: \mB\rightarrow [0,1]$ as follows:
\[
f_i(b ) = \Pr_{v_i\sim \D_i}[s_i(v_i) = b ] \quad \text{and} \quad F_i(b ) = \Pr_{v_i\sim \D_i}[s_i(v_i) \leq b ].
\]
In the rest part of section, we abbreviate 
\[
\Gamma_i(b, s_{-i}) := \E_{v_{-i}\sim \D_{-i}}[\Gamma_i(b, s_{-i}(v_{-i}))] \quad \text{and}\quad u_i(v_i; b, s_{-i}) := \E_{v_{-i}\sim \D_{-i}}[u_i(v_i; b; s_{-i}(v_{-i}))] 
\]
when there is no confusion.

We start with the following lemma.
\begin{lemma}[Separable bid]
\label{lem:separable}
In any $\eps$-BNE, the equilibrium strategy $s$ satisfies
\begin{itemize}
\item For a standard player $i \in [n]$, its equilibrium strategy satisfies 
\begin{itemize}
\item when $v_i \in [0, \eps]$, $s_i(v_i) = b_0$; 
\item when $v_i \in [b_2 - \eps, b_2]$, $s_i(v_i) = b_1$; and 
\item when $v_i \in [1-\eps, 1]$, $s_i(v_i)= b_2$. 
\end{itemize}
\item For the pivot player, its equilibrium strategy satisfies 
\begin{itemize}
\item when $v_{n+1}\in [0, \eps]$, $s_{n+1}(v_{n+1}) = b_0$; and 
\item when $v_{n+1}\in [1-\eps, 1]$, $s_{n+1}(v_{n+1}) = b_2$.
\end{itemize}
\end{itemize}
\end{lemma}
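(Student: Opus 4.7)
My plan is to handle the three value regions one at a time, deriving a contradiction to the $\eps$-BNE condition whenever a player deviates from the asserted bid, by exploiting that each distribution concentrates almost all its mass on a few narrow windows.

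The low-value cases are immediate: since $\eps = n^{-40} < n^{-24} = b_1$, no overbidding $s(v)\le v$ forces $s(v)=b_0$ for every player whose value lies in $[0,\eps]$. For a standard $i$ with $v_i\in[b_2-\eps,b_2]$, no overbidding further restricts $s_i(v_i)\in\{b_0,b_1\}$, so I only need to beat $b_0$ by more than $\eps$. I will give two crude allocation bounds: $\Gamma_i(b_1,s_{-i})\ge \tfrac12-O(n\delta)$, because $i$ is uniquely the highest bidder whenever every other standard bids $\le b_0$ (probability $\ge (1-(2+\beta)\delta)^{n-1}$, using $\Pr[v_j\ge b_1]=(2+\beta)\delta$) and the pivot bids $\le b_0$ (probability $\ge 1/2$, from its low-value mass on $[0,\eps]\subset[0,b_1)$); and $\Gamma_i(b_0,s_{-i})\le 1/(n+1)$, because bidding $b_0$ wins only when every other player also bids $b_0$, forcing a uniform tie of size $n+1$. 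Then $u_i(v_i;b_1)-u_i(v_i;b_0)\ge \Omega(b_2)\gg\eps$.

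The high-value region is the heart of the proof. For the pivot, the rule awarding it the full item in every two-player tie gives $\Gamma_{n+1}(b_2,s_{-(n+1)})\ge 1-\binom{n}{2}((1+\beta)\delta)^2=1-O(n^2\delta^2)$, so $u_{n+1}(v;b_2)\ge 1-O(b_2)$ rules out $b_0$ against $u_{n+1}(v;b_0)\le 1/(n+1)$. To rule out $b_1$, I assume for contradiction $s_{n+1}(v^*)=b_1$ for some $v^*\in[1-\eps,1]$; the $\eps$-BNE inequality combined with $\Gamma_{n+1}(b_1)\le \prod_j(1-q_j)$ (writing $q_j:=\Pr[s_j(v_j)=b_2]$) forces the aggregate mass $\alpha:=1-\prod_j(1-q_j)\le O(b_2)$, whence each $q_j\le O(b_2)=O(\delta/n^2)\ll \delta$. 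Ruling out $b_0$ for standards on $[1-\eps,1]$ by the same ``$\le 1/(n+1)$ vs.\ $\ge 1/2-O(n\delta)$'' comparison as above, monotonicity then pins each standard to $b_1$ on a subset of $[1-\eps,1]$ of measure $\ge \eps/2$, giving $\Pr[s_j=b_1]\ge \delta/2$. Applying $\eps$-BNE to such a standard at a value where it bids $b_1$ yields $\Gamma_j(b_2)-\Gamma_j(b_1)\le O(b_2)$; but the event ``pivot bids $b_0$, some standard $k\ne j$ bids $b_1$, no standard bids $b_2$'' has probability $\Omega(n\delta)$ by independence and the bounds just derived, and inside it, switching $j$ from $b_1$ to $b_2$ converts a tie against other standards (allocation $\le 3/4$ since $\Sigma^A\in[1/4,3/4]$) into a unique win, contributing at least $1/4$ to the allocation gap. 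This gives $\Gamma_j(b_2)-\Gamma_j(b_1)\ge \Omega(n\delta)\gg b_2$, a contradiction. With the pivot now pinned to $b_2$ on $[1-\eps,1]$, ruling out $b_1$ for a standard is analogous, splitting on whether the other standards' aggregate $b_2$-mass is large (the ``pivot $b_0$, some other standard at $b_2$'' event yields an $\Omega(\alpha)\gg b_2$ gap through $\Sigma^A$) or small (the ``tie against other standards at $b_1$'' contribution from the pivot step reappears).

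The main obstacle is the coupling between the pivot's and the standards' high-value bids: the pivot's preference between $b_1$ and $b_2$ depends on how much standard-$b_2$ mass it faces, and vice versa. My resolution is to refuse to assume anything about the opposite side when bounding an individual deviation gain, and to leverage the parameter hierarchy $\eps\ll b_1\ll b_2\ll \delta\ll n\delta\ll 1$ built into the construction: any violation of the lemma funnels some standard into putting $\Omega(\delta)$ mass on $b_1$, and the $n$ standards then manufacture an $\Omega(n\delta)$ allocation gap that overwhelms the $O(b_2)$ slack the $\eps$-BNE condition allows.
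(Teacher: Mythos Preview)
Your argument is correct but takes a more circuitous route than the paper. The low- and middle-value cases match the paper's treatment. For the high-value region, the ``coupling'' you identify is only one-directional, and the paper exploits this: once $b_0$ is ruled out on $[1-\eps,1]$ for all players (so $f_{n+1}(b_0)=1/2$), the standard-player gap $\Gamma_i(b_2,s_{-i})-\Gamma_i(b_1,s_{-i})$ can be bounded below by $n\delta/32$ \emph{directly}, using only $f_j(b_1)\ge\delta$ from the middle interval and $\Sigma^A_{i,j}\le 3/4$. The contribution of the pivot's unknown $b_1$-mass $f_{n+1}(b_1)$ to this gap turns out to be nonnegative (intuitively, when the pivot bids $b_1$ it takes the whole item from $i$ in a two-way tie, which only strengthens $i$'s incentive to jump to $b_2$), so no information about the pivot's $b_1$/$b_2$ split or about the other standards' $q_j$ is needed. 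The paper therefore handles standard players first, with no contradiction argument; this immediately yields $f_j(b_2)\ge\delta$ for every standard $j$, after which the pivot's gap $\Gamma_{n+1}(b_2)-\Gamma_{n+1}(b_1)\ge n\delta/2$ follows from a second direct computation. Your nested contradiction (pivot first $\Rightarrow$ small $q_j$ $\Rightarrow$ standard-player contradiction) works, but is avoidable; likewise your final case split on the aggregate $b_2$-mass is unnecessary, since the ``one other standard at $b_1$, the rest at $b_0$'' event already has probability $\Omega(n\delta)$ regardless of that mass --- $f_r(b_0)=1-(2+\beta)\delta$ is pinned down by the construction and does not depend on how the high-value mass splits between $b_1$ and $b_2$.
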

\begin{proof}
The claim of $s_i(\eps) = 0$ holds trivially for all $i \in [n+1]$ due to the no-overbidding assumption.
A standard player $i$ chooses between $b_0$ and $b_1$ for $v_i \in [b_2 - \eps, b_2]$.
The allocation probability $\Gamma_i(b_0, s_{-i})$ of bidding $b_0$ satisfies $\Gamma_i(b_0, s_{-i}) \leq \frac{1}{n+1}$, hence the utility of bidding $b_0 = 0$ is at most 
\[
u_i(v_i; b_0, s_{-i}) = (v_i - b_0) \cdot \Gamma_i(b_0, s_{-i}) \leq \frac{1}{n} b_2.
\]
The allocation probability of bidding $b_1$ is at least 
\[
\Gamma_i(b_1, s_{-i}) \geq \prod_{i\in [n+1]}f_i(b_0) \geq (1 - (2+\beta)\delta)^{n} \cdot \frac{1}{2} \geq \frac{1}{3}
\] 
hence the utility of bidding $b_1$ is at least  
\[
u_i(v_i; b_1, s_{-i}) = (v_i - b_1) \cdot \Gamma_i(b_1, s_{-i}) \geq  (b_2 -\eps - b_1) \cdot \frac{1}{3} > u_i(v; b_0, s_{-i})  + \eps.
\]

Finally, we analyse the equilibrium strategy around $v \in [1-\eps, 1]$ for all $n+1$ players. Via an analysis similar to the above argument, it is clear that both standard players and the pivot player would choose between $b_1$ and $b_2$.
For a standard player $i \in [n]$,  
the allocation probability of bidding $b_{2}$ satisfies
\begin{align}
    \Gamma_i(b_{2}, s_{-i}) \geq &~ F_{n+1}(b_1) \cdot \prod_{j \in [n]\setminus [i]}F_{j}(b_1) \notag\\
    \geq &~ F_{n+1}(b_1) \cdot \left(\prod_{j \in [n]\setminus [i]}f_{j}(b_0) + \sum_{j \in [n]\setminus [i]}f_j(b_1) \prod_{r\in [n]\setminus \{i, j\}}f_r(b_0) \right)\notag \\
    \geq &~ \frac{1}{2}  \left(\prod_{j \in [n]\setminus [i]}f_j(b_0) + \sum_{j \in [n]\setminus [i]}f_j(b_1) \prod_{r\in [n]\setminus \{i, j\}}f_r(b_0) \right) + \frac{1}{2}f_{n+1}(b_1) ,\label{eq:separable-4}
\end{align}
where the last step holds as $f_{n+1}(b_0) = \frac{1}{2}$ and
\begin{align}
\prod_{j \in [n]\setminus [i]}f_j(b_0)  \geq (1 -(2+\beta)\delta)^{n-1} \geq \frac{1}{2} . \label{eq:separable-5}
\end{align}
The allocation probability of bidding $b_1$ satisfies
\begin{align}
    \Gamma_i(b_{1}, s_{-i}) \leq &~  f_{n+1}(b_0) \cdot \left(\prod_{j \in [n]\setminus [i]}f_j(b_0) + \sum_{j \in [n]\setminus [i]}f_j(b_1) \cdot \Sigma^A_{i, j} \prod_{r\in [n]\setminus \{i, j\}}f_r(b_0) + 9n^2\delta^2\right) \notag\\
    +&~  f_{n+1}(b_1) \cdot \sum_{j \in [n]\setminus\{i\}}f_j(b_1) \notag \\
     \le &~ \frac{1}{2} \left(\prod_{j \in [n]\setminus [i]}f_j(b_0) + \sum_{j \in [n]\setminus [i]}f_j(b_1) \cdot \Sigma^{A}_{i, j} \prod_{r\in [n]\setminus \{i, j\}}f_r(b_0)\right) + f_{n+1}(b_1)\cdot  3n\delta + 9n^2\delta^2\label{eq:separable-3}.
\end{align}
Here the first step holds since (1) when the pivot player bids $b_0$, player $i$ obtains $\Sigma^{A}_{i, j}$ unit of item when (only) player $j$ bids $b_1$, and the probability of at least two players bidding $b_1$ is bounded as
\begin{align*}
\sum_{j, r \in [n]\backslash \{i\}} \Pr[s_r(v_r) = b_1 \wedge s_j(v_j) = b_1] \leq n^2\cdot  9\delta^2,
\end{align*}
(2) when the pivot player bids $b_1$, the player $i$ obtains the item only if there exists at least one other standard player $j$ bids $b_1$ as the tie breaking rule assigns the item fully to player $n+1$ when there are only two winners. The second step follows from $f_j(b_1)\leq 3\delta$
and that the pivot player does not bid $b_0$ in $[1-\eps,1]$ so $f_{n+1}(b_0)=1/2$.

Subtracting Eq.~\eqref{eq:separable-3} and Eq.~\eqref{eq:separable-4}, one obtains
\begin{align}
\Gamma_i(b_{2}, s{-i})  - \Gamma_i(b_{1}, s_{-i})\notag \\
\geq &~   \frac{1}{2}\sum_{j \in [n]\setminus [i]}f_j(b_1) \cdot (1 - \Sigma^{A}_{i, j}) \prod_{r\in [n]\setminus \{i, j\}}f_r(b_0) +\frac{1}{2}f_{n+1}(b_1) -3n\delta f_{n+1}(b_1)-9n^2\delta^2 \notag \\
\geq &~ \frac{1}{2} \cdot (n-1) \cdot \delta \cdot \frac{1}{4} \cdot \frac{1}{2} + \frac{1}{2}f_{n+1}(b_1) -3n\delta f_{n+1}(b_1)-9n^2\delta^2\geq \frac{n\delta}{32}\label{eq:diff}.
\end{align} 
The second step holds due to $f_j(b_1) \geq \delta$, $\Sigma^{A}_{i, j} \in [1/4, 3/4]$ and Eq~\eqref{eq:separable-5}.
Hence we claim player $i$ prefers $b_2$ than $b_1$ at value $v_{i} \in [1-\eps, 1]$, since
\begin{align*}
u_{i}(v_{i}; b_2, s_{-i}) - u_{i}(v_i; b_1, s_{-i}) =&~  (v_{i} - b_2) \cdot \Gamma_{i}(b_2, s_{-i}) - (v_{i} - b_1) \cdot \Gamma_{i}(b_1, s_{-i}) \\
\geq &~ v_{i} \cdot (\Gamma_{i}(b_2, s_{-i}) - \Gamma_{i}(b_1, s_{-i})) - b_2\\
\geq &~ (1-\eps) \cdot \frac{n\delta}{32} - b_2 > \eps.
\end{align*}

Finally, for the pivot player $n+1$, the allocation probability of bidding $b_1$ satisfies
\begin{align}
\Gamma_{n+1}(b_1, s_{-i}) \leq &~ \prod_{i \in [n]}F_i(b_1)
\end{align}
and the allocation probability of $b_2$ satisfies
\begin{align}
\Gamma_{n+1}(b_2, s_{-i}) \geq &~ \prod_{i \in [n]}F_i(b_1) + \sum_{i \in [n]}f_i(b_2) \prod_{j \in [n]\backslash \{i\}}F_i(b_1)\notag \\
\geq &~\prod_{i \in [n]}F_i(b_1) + \frac{n\delta}{2}.\label{eq:separable2}
\end{align}
The first step holds since the tie-breaking rule favors player $n+1$ when at most one player in $[n]$ bids $b_2$, the second step holds due to $f_i(b_2) \geq \delta$ and Eq.~\eqref{eq:separable-5}.
Hence, at any $v_{n+1} \in [1-\eps, 1]$ we have
\begin{align*}
u_{n+1}(v_{n+1}; b_2, s_{-i}) - u_{n+1}(v; b_1, s_{-i})  = &~ (v_{n+1} - b_2) \cdot \Gamma_{n+1}(b_2, s_{-i}) - (v_{n+1} - b_1) \cdot \Gamma_{n+1}(b_1, s_{-i}) \\
\geq &~ v_{n+1} \cdot (\Gamma_{n+1}(b_2, s_{-i}) - \Gamma_{n+1}(b_1, s_{-i})) - b_2\\
\geq &~ (1-\eps) \cdot \frac{n\delta}{2} - b_2 > \eps.
\end{align*}
We conclude the proof of the lemma here.
\end{proof}

Lemma \ref{lem:separable} confirms that $b_0, b_1, b_2$ would appear in an $\eps$-BNE profile for every player $i\in [n]$. It still remains to determine at which value point a standard player $i \in [n]$ jumps from $b_1$ to $b_2$ in $s_i$.
Let $\tau_{i} \in (b_2, 1-\eps)$ be the jumping point from $b_{1}$ to $b_{2}$ of a standard player $i$. The following formula is convenient to use.

\begin{lemma}[Jumping point formula]
\label{lem:formula}
The jumping point $\tau_i$ of a standard player $i \in [n]$ satisfies
\begin{align*}
    \tau_{i} = b_{2} + \frac{\Gamma_i(b_{1}, s_{-i})\cdot (b_{2} - b_1)}{\Gamma_i(b_{2}, s_{-i}) - \Gamma_i(b_{1}, s_{-i})} \pm O\left(\frac{\eps}{\delta}\right).
\end{align*}
\end{lemma}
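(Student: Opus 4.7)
The plan is to identify the exact indifference value
\[
\tau_i^\star := b_2 + \frac{\Gamma_i(b_1, s_{-i})(b_2 - b_1)}{\Gamma_i(b_2, s_{-i}) - \Gamma_i(b_1, s_{-i})},
\]
and then use the $\eps$-BNE inequality to pin $\tau_i$ within $O(\eps/\delta)$ of $\tau_i^\star$. The key observation is that the interim utility gap between bids $b_2$ and $b_1$,
\[
\Delta_i(v) := u_i(v; b_2, s_{-i}) - u_i(v; b_1, s_{-i}) = (v - b_2)\Gamma_i(b_2, s_{-i}) - (v - b_1)\Gamma_i(b_1, s_{-i}),
\]
is affine in $v$ with slope $\Gamma_i(b_2, s_{-i}) - \Gamma_i(b_1, s_{-i})$; an elementary rearrangement gives $\Delta_i(\tau_i^\star) = 0$, so
\[
\Delta_i(v) = (v - \tau_i^\star) \cdot \big(\Gamma_i(b_2, s_{-i}) - \Gamma_i(b_1, s_{-i})\big).
\]

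By Lemma~\ref{lem:separable} and monotonicity of $s_i$, the jumping point $\tau_i$ lies in $[b_2, 1-\eps)$, and for every sufficiently small $\xi > 0$ one has $s_i(\tau_i - \xi) = b_1$ and $s_i(\tau_i + \xi) = b_2$ (the left side remaining valid if $\tau_i = b_2$, since $s_i$ bids $b_1$ throughout $[b_2-\eps, b_2]$ by Lemma~\ref{lem:separable}). Applying the $\eps$-BNE condition at $v_i = \tau_i - \xi$ with alternative bid $b_2$ gives $\Delta_i(\tau_i - \xi) \leq \eps$, while applying it at $v_i = \tau_i + \xi$ with alternative bid $b_1$ gives $\Delta_i(\tau_i + \xi) \geq -\eps$. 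Letting $\xi \to 0^+$ and using continuity of $\Delta_i$ yields $|\Delta_i(\tau_i)| \leq \eps$, i.e.,
\[
|\tau_i - \tau_i^\star| \leq \frac{\eps}{\Gamma_i(b_2, s_{-i}) - \Gamma_i(b_1, s_{-i})}.
\]

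To close out, I would invoke the slope lower bound $\Gamma_i(b_2, s_{-i}) - \Gamma_i(b_1, s_{-i}) \geq n\delta/32$ already established in Eq.~\eqref{eq:diff} during the proof of Lemma~\ref{lem:separable}; the bound depends only on the other players' strategies $s_{-i}$ and is therefore independent of $v_i$. Plugging in gives $|\tau_i - \tau_i^\star| = O(\eps/(n\delta)) = O(\eps/\delta)$, matching the claim. There is no serious obstacle beyond this linearity-plus-slope-bound argument; the only subtlety is that the $\eps$-BNE inequality is pointwise in $v_i$, so one must apply it at specific value perturbations rather than in expectation---which is exactly the well-supported notion of equilibrium used throughout this section.
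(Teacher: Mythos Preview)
Your proposal is correct and follows essentially the same approach as the paper: both identify the exact indifference point, argue that the interim utilities at $b_1$ and $b_2$ differ by at most $\eps$ at $\tau_i$, and then divide by the slope lower bound $\Gamma_i(b_2,s_{-i})-\Gamma_i(b_1,s_{-i})\ge n\delta/32$ from Eq.~\eqref{eq:diff}. Your version is slightly more explicit in justifying $|\Delta_i(\tau_i)|\le\eps$ via the two-sided application of the well-supported $\eps$-BNE condition, whereas the paper simply writes ``solving $u_i(\tau_i,b_1,s_{-i})=u_i(\tau_i,b_2,s_{-i})\pm\eps$''---but the content is identical.
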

\begin{proof}
At any value point $v \in [0, 1]$, recall the utility of bidding $b_1$ equals
\begin{align*}
    u_{i}(v_i; b_1,s_{-i}) = (v_i - b_1)\Gamma_i(b_1, s_{-i})
\end{align*}
and the utility of bidding $b_{2}$ equals 
\[
u_{i}(v_i; b_{2},s_{-i}) = (v_i - b_{2})\Gamma_i(b_{2}, s_{-i}).
\]
Solving for $u_{i}(\tau_i, b_1,s_{-i}) = u_{i}(\tau_i, b_{2},s_{-i}) \pm \eps$, one obtains
\begin{align*}
    \tau_{i} = &~  \frac{\Gamma_i(b_{2}, s_{-i}) b_{2} - \Gamma_i(b_{1}, s_{-i})b_1 \pm \eps }{\Gamma_i(b_{2}, s_{-i}) - \Gamma_i(b_{1}, s_{-i})} \\
    = &~ b_{2} + \frac{\Gamma_i(b_{1}, s_{-i})\cdot (b_{2} - b_1) \pm \eps}{\Gamma_i(b_{2}, s_{-i}) - \Gamma_i(b_{1}, s_{-i})}\\ 
    = &~ b_{2} + \frac{\Gamma_i(b_{1}, s_{-i})\cdot (b_{2} - b_1)}{\Gamma_i(b_{2}, s_{-i}) - \Gamma_i(b_{1}, s_{-i})} \pm O\left(\frac{\eps}{\delta}\right) \label{eq:diff}.
\end{align*}
The last step follows from Eq.~\eqref{eq:diff}, and this finishes the proof of the lemma.
\end{proof}

Let $x_i \in [0, \beta\delta]$ be the probability mass over the interval $(b_2, \tau_i)$, i.e., $x_i = \int_{b_2}^{\tau_i}p_i(v)\mathsf{d}v$, which we will refer to the jumping probability of $s_i$. We state a few facts that will be used repeatedly.
\begin{lemma}[Basic facts]
\label{lem:basic-fact}\ 
\begin{itemize}
\item For any standard player $i \in [n]$, we have
\[
\prod_{j \in [n]\setminus \{i\}}f_j(b_0) = 1 -(n-1)(2+\beta)\delta \pm O(n^2\delta^2)
\]
and
\[
\prod_{j \in [n]\setminus \{i\}}F_j(b_1) = 1 -(n-1)(\beta+1)\delta + \sum_{j \in [n]\setminus \{i\}}x_j \pm O(n^2\delta^2).
\]
\item For any $e \in \{1,2\}$, we have 
$$\sum_{i\neq j \in [n]} \Pr\big[s_i(v_i) = b_e \wedge s_j(v_j) = b_e\big] = O(n^2\delta^2).$$
\end{itemize}
\end{lemma}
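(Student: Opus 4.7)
The plan is to compute each quantity by first pinning down the single-player probabilities $f_j(b_0), f_j(b_1), f_j(b_2), F_j(b_1)$ exactly from Lemma~\ref{lem:separable} and the density function of $\D_j$, and then doing a routine first-order expansion of the products. Specifically, for any standard player $j \in [n]$, Lemma~\ref{lem:separable} tells us $s_j(v_j)=b_0$ on $[0,\eps]$, $s_j(v_j)=b_1$ on $[b_2-\eps,b_2]$, $s_j(v_j)=b_2$ on $[1-\eps,1]$, and on the middle interval $(b_2,1-\eps)$ the strategy bids $b_1$ below the jumping point $\tau_j$ and $b_2$ above it. Combining with the definition of $p_j$ and of $x_j = \int_{b_2}^{\tau_j}p_j(v)\,\d v$, this gives
\[
f_j(b_0) = 1-(2+\beta)\delta,\quad f_j(b_1) = \delta + x_j,\quad f_j(b_2)=(1+\beta)\delta-x_j,
\]
and consequently $F_j(b_1) = 1 - (1+\beta)\delta + x_j$.

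For the first product, since $(2+\beta)\delta = O(\delta)$, I would expand $(1-(2+\beta)\delta)^{n-1}$ by the binomial theorem: the linear term contributes $-(n-1)(2+\beta)\delta$, and all higher-order terms collectively are $O(n^2\delta^2)$. For the second product, write each factor as $1-a_j$ with $a_j := (1+\beta)\delta - x_j \in [\delta,(1+\beta)\delta]$, so that $\prod_{j\neq i}(1-a_j) = 1 - \sum_{j\neq i}a_j + R$ where $R$ collects all products of at least two $a_j$'s and is bounded by $\binom{n}{2}(2\delta)^2 = O(n^2\delta^2)$. Substituting $a_j = (1+\beta)\delta - x_j$ gives the claimed expression $1 - (n-1)(1+\beta)\delta + \sum_{j\neq i}x_j \pm O(n^2\delta^2)$.

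For the second bullet, since $x_j \leq \beta\delta = O(\delta)$, each $f_j(b_e)$ for $e\in\{1,2\}$ and $j\in[n]$ is at most $(1+\beta)\delta = O(\delta)$. By independence of the value distributions, $\Pr[s_i(v_i)=b_e \wedge s_j(v_j)=b_e] = f_i(b_e)f_j(b_e) = O(\delta^2)$, and summing over at most $n^2$ ordered pairs yields $O(n^2\delta^2)$.

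There is no serious obstacle: everything reduces to bookkeeping once Lemma~\ref{lem:separable} is in hand. The only minor point to be careful about is that the jumping point $\tau_j$ is not yet pinned down, but we only need the bound $x_j \in [0,\beta\delta]$ which follows from the construction of $p_j$ on the middle interval $(b_2,1-\eps)$, so the quantities $x_j$ simply appear symbolically in the final expression for $\prod_{j\neq i}F_j(b_1)$.
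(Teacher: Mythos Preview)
Your proposal is correct and follows essentially the same approach as the paper: both compute the single-player probabilities $f_j(b_0)=1-(2+\beta)\delta$ and $F_j(b_1)=1-(1+\beta)\delta+x_j$ from Lemma~\ref{lem:separable} and the density construction, then expand the $(n-1)$-fold products to first order with an $O(n^2\delta^2)$ remainder, and bound the second item via independence and $f_j(b_e)\le (1+\beta)\delta$. If anything, your write-up is slightly more explicit about the remainder bound and about why the strategy on $(b_2,1-\eps)$ takes only values $b_1,b_2$ (monotonicity plus $s_j(b_2)=b_1$), which the paper leaves implicit.
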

\begin{proof}
For the first claim, we have
\begin{align*}
\prod_{j \in [n]\setminus \{i\}}f_j(b_0) = (1 - (2+\beta)\delta)^{n -1} = 1 - (n-1)(2+\beta)\delta  \pm O(n^2\delta^2)
\end{align*}
due to the choice of $\delta$. Similarly we have (using $x_i\in [0,\beta\delta]$
\begin{align*}
\prod_{j \in [n]\setminus \{i\}}F_j(b_1)  =  &~ \prod_{j \in [n]\setminus \{i\}}(1 - (1+\beta)\delta + x_i) \\
= &~ 1 - (n-1)(1+\beta)\delta + \sum_{j \in [n]\backslash \{i\}}x_j \pm O(n^2\delta^2).
\end{align*}
For the second claim, for any $e\in \{1,2\}$, we have
\begin{align*}
\sum_{i \neq j \in [n]} \Pr\big[s_i(v_i) = b_e \wedge s_j(v_j) = b_e\big] \leq \sum_{i\neq j \in [n]}(1+\beta)\delta^2= O(n^2\delta^2).
\end{align*}
We conclude the proof here.
\end{proof}

The key step is to determine the jumping point, where we use approximation.
\begin{lemma}[Jumping point]
\label{lem:jump}
The jumping point $\tau_i$ of a standard player $i\in [n]$ satisfies
\begin{align*}
\tau_i = \left(\frac{\Delta_{i,1} - \Delta_{i,2}}{\Delta_{i,1}^2} \pm O\left(\frac{\beta^2}{\Delta_{i,1}}\right) \right)\cdot b_2 
\end{align*}
where
\begin{align*}
\Delta_{i, 1} := &~ (n-1)\delta + \sum_{j\in [n]\setminus \{i\}}\Big(\beta\delta\cdot \Sigma^{A}_{i,j} + (1+\beta)\delta\cdot \Sigma^{B}_{i,j}\Big) \in \big[(n-1)\delta, 2n\delta\big]\quad\text{and}\\
\Delta_{i, 2} := &~ \sum_{j \in [n]\backslash \{i\}} \Big(1 - 2\Sigma^{A}_{ i, j} - \Sigma^{B}_{i, j}\Big)x_{j}  \in \big[-2n\beta\delta, n\beta\delta\big].
\end{align*}
\end{lemma}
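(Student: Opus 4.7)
The plan is to combine Lemma~\ref{lem:formula}, which gives $\tau_i = b_2 + \Gamma_i(b_1,s_{-i})(b_2-b_1)/(\Gamma_i(b_2,s_{-i})-\Gamma_i(b_1,s_{-i})) \pm O(\eps/\delta)$, with a first-order expansion of the two allocation probabilities in $\delta$, followed by a Taylor expansion in the relative quantity $\Delta_{i,2}/\Delta_{i,1}$ to extract the form $(\Delta_{i,1}-\Delta_{i,2})/\Delta_{i,1}^2$.

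The first step is to evaluate $\Gamma_i(b_1,s_{-i})$ and $\Gamma_i(b_2,s_{-i})$ via case analysis on who ties with $i$. From Lemma~\ref{lem:separable} one reads off $f_j(b_0)=1-(2+\beta)\delta$, $f_j(b_1)=\delta+x_j$, $f_j(b_2)=(1+\beta)\delta-x_j$ for each standard $j\in[n]$, and $f_{n+1}(b_0)=f_{n+1}(b_2)=1/2$, $f_{n+1}(b_1)=0$ for the pivot. By the second bullet of Lemma~\ref{lem:basic-fact}, the probability that two or more other players simultaneously bid $b_1$ or $b_2$ is $O(n^2\delta^2)$; absorbing this tail, only three first-order events contribute: no other standard player bids $\geq b_1$; exactly one $j$ bids $b_1$; or exactly one $j$ bids $b_2$. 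Combined with the pivot's two-way split (pivot outbids $i$ at $b_1$, wins any two-way tie with $i$ at $b_2$, and invokes $\Sigma^B$ in any three-way tie involving pivot while $\Sigma^A$ governs pure standard-player two-way ties), this yields
\begin{align*}
\Gamma_i(b_1,s_{-i}) &= \tfrac{1}{2}\prod_{j\neq i}f_j(b_0) + \tfrac{1}{2}\sum_{j\neq i} f_j(b_1)\Sigma^A_{i,j}\prod_{r\neq i,j}f_r(b_0) \pm O(n^2\delta^2),\\
\Gamma_i(b_2,s_{-i}) &= \tfrac{1}{2}\prod_{j\neq i}F_j(b_1) + \tfrac{1}{2}\sum_{j\neq i} f_j(b_2)(\Sigma^A_{i,j}+\Sigma^B_{i,j})\prod_{r\neq i,j}F_r(b_1) \pm O(n^2\delta^2).
\end{align*}

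The second step is to take the difference. The first bullet of Lemma~\ref{lem:basic-fact} gives $\prod_{j\neq i}F_j(b_1)-\prod_{j\neq i}f_j(b_0) = (n-1)\delta + \sum_{j\neq i}x_j \pm O(n^2\delta^2)$, and also lets me replace each inner product $\prod_{r\neq i,j}$ by $1$ at additional cost $O(n^2\delta^2)$. Substituting $f_j(b_2)-f_j(b_1)=\beta\delta-2x_j$ and $f_j(b_2)=(1+\beta)\delta-x_j$ and regrouping by powers of $\delta$ and by $x_j$, everything collapses into $(n-1)\delta + \sum_{j\neq i}[\beta\delta\cdot\Sigma^A_{i,j} + (1+\beta)\delta\cdot\Sigma^B_{i,j}] + \sum_{j\neq i}x_j(1-2\Sigma^A_{i,j}-\Sigma^B_{i,j}) = \Delta_{i,1}+\Delta_{i,2}$. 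Hence $\Gamma_i(b_2,s_{-i})-\Gamma_i(b_1,s_{-i}) = \tfrac{1}{2}(\Delta_{i,1}+\Delta_{i,2}) \pm O(n^2\delta^2)$, while dropping the same $O(n\delta)$ corrections in $\Gamma_i(b_1,s_{-i})$ yields $\Gamma_i(b_1,s_{-i}) = 1/2 \pm O(n\delta)$.

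Finally, I would plug these estimates into Lemma~\ref{lem:formula}. Since $\Delta_{i,1}=\Theta(n\delta)$ and $|\Delta_{i,2}|=O(n\beta\delta)$, the ratio $\Delta_{i,2}/\Delta_{i,1}$ is $O(\beta)\ll 1$ and the Taylor expansion
\begin{align*}
\frac{1}{\Delta_{i,1}+\Delta_{i,2}} = \frac{\Delta_{i,1}-\Delta_{i,2}}{\Delta_{i,1}^2} \pm O\!\left(\frac{\beta^2}{\Delta_{i,1}}\right)
\end{align*}
is valid. Multiplying by the numerator $\Gamma_i(b_1,s_{-i})(b_2-b_1) = \tfrac{1}{2}b_2(1+O(n\delta))$ (the $b_1/b_2 = \delta/n^2$ correction is negligible), and adding back the standalone $b_2$ from Lemma~\ref{lem:formula}, one obtains the claimed main term once the $O(\eps/\delta)$ slack of Lemma~\ref{lem:formula} together with all $O(n\delta)$-relative and $O(n^2\delta^2)$-additive errors are absorbed into $O(\beta^2/\Delta_{i,1})\cdot b_2$. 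The main obstacle is exactly this error bookkeeping: under the choices $\eps=1/n^{40}$, $\delta=1/n^{10}$, $\beta=1/n^4$, one must verify each correction (the additive $b_2$, the $O(\eps/\delta)$ slack, the $O(n\delta)\cdot b_2/\Delta_{i,1}$ multiplicative error, and the $O(n^2\delta^2)\cdot b_2/\Delta_{i,1}^2$ denominator error) is at most $O(\beta^2/\Delta_{i,1})\cdot b_2 = \Theta(b_2/n^{11})$; a further care point is keeping the $\Sigma^B$-coefficient $(1+\beta)\delta$ correct in $\Delta_{i,1}$, which reflects that three-way ties with the pivot only arise when the pivot itself bids $b_2$ and uses the full $F_j(b_1)$ probability rather than the $f_j(b_0)$ probability.
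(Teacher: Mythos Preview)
Your proposal is correct and follows essentially the same approach as the paper's proof: compute $\Gamma_i(b_1,s_{-i})$ and $\Gamma_i(b_2,s_{-i})$ to first order via Lemma~\ref{lem:basic-fact} and the tie-breaking case analysis, obtain $\Gamma_i(b_2,s_{-i})-\Gamma_i(b_1,s_{-i})=\tfrac12(\Delta_{i,1}+\Delta_{i,2})\pm O(n^2\delta^2)$ and $\Gamma_i(b_1,s_{-i})=\tfrac12\pm O(n\delta)$, then plug into Lemma~\ref{lem:formula} and Taylor-expand $1/(\Delta_{i,1}+\Delta_{i,2})$. Your error bookkeeping is also accurate; the only cosmetic slip is the closing remark about the $(1+\beta)\delta$ coefficient of $\Sigma^B_{i,j}$, which actually comes from the $x_j$-free part of $f_j(b_2)=(1+\beta)\delta-x_j$ rather than from using $F_j(b_1)$ in place of $f_j(b_0)$.
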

\begin{proof}
For any standard player $i \in [n]$, we compute when it jumps from $b_{1}$ to $b_2$ using the formula in Lemma \ref{lem:formula}.
To do so, we first compute $\Gamma_i(b_{1}, s_{-i})$ and $\Gamma_i(b_{2}, s_{-i})$.
\begin{align}
\Gamma_i(b_{1}, s_{-i}) =&~ f_{n+1}(b_0) \cdot \left( \prod_{j \in [n]\setminus \{i\}}f_j(b_0) + \sum_{j\in [n]\setminus\{i\}}f_j(b_1)\prod_{j \in [n]\setminus \{i, j\}}f_j(b_0)\cdot \Sigma^{A}_{ i, j} \pm O(n^2\delta^2)  \right)\notag\\
= &~ \frac{1}{2}  \left( 1 - (n-1)(2+\beta)\delta + \sum_{j\in [n]\setminus\{i\}} (\delta + x_j)\Sigma^{A}_{i,j}  \right)\pm O(n^2\delta^2). \label{eq:prob1}
\end{align}
Here the first step follows from the tie-breaking rule and the second claim of Lemma \ref{lem:basic-fact}, the second step follows from $f_j(b_1) = \delta + x_j$ and the first claim of Lemma \ref{lem:basic-fact}.

The allocation probability of bidding $b_2$ obeys
\begin{align*}
    \Gamma_i(b_{2}, s_{-i}) = &~ f_{n+1}(b_0) \cdot \left(\prod_{j \in [n]\setminus \{i\}}F_j(b_1) +  \sum_{j\in [n]\setminus\{i\}}f_j(b_2) \prod_{j \in [n]\setminus \{i, j\}}F_j(b_1) \cdot \Sigma^{A}_{i, j} \pm O(n^2\delta^2) \right)\\
    &~ + f_{n+1}(b_2) \cdot \left(\sum_{j\in [n]\setminus\{i\}}f_j(b_2)\prod_{k \in [n]\setminus \{i, j\}}F_k(b_1) \cdot \Sigma^{B}_{i, j} \pm O(n^2\delta^2) \right)\\
    = &~ \frac{1}{2} \left(1 - (n-1)(1+\beta)\delta + \sum_{j \in [n]\setminus \{i\}}x_j + \sum_{j \in [n]\setminus \{i\}}(\delta + \beta\delta - x_j)\Sigma^{A}_{i,j}  \right)\\
    &~ + \frac{1}{2}\sum_{j \in [n]\setminus \{i\}}(\delta + \beta\delta - x_j)\Sigma^{B}_{i, j} \pm O(n^2 \delta^2).
\end{align*}
The first step uses the tie breaking rule and requires some explanations. In particular, (1) when the pivot player $n+1$ bids $b_0$, the player $i$ obtains $1$ unit of item when other players bid less than $b_2$, $\Sigma_{A, i, j}$ unit of item when only player $j$ bids $b_2$; we also make use of the second claim of Lemma \ref{lem:basic-fact} to omit the other case; (2) when the player $n+1$ bids $b_2$, the player $i$ obtains $0$ unit of good when no other players bid $b_0$ and obtains $\Sigma_{B,i, j}$ unit of goods when one other player $j$ bids $b_2$, and we omit other cases using Lemma \ref{lem:basic-fact}.
The second step follows from Lemma \ref{lem:basic-fact} and $f_j(b_2) = \delta + \beta\delta - x_j$.

Combining the above expression, we have
\begin{align}
\Gamma_i(b_{2}, s_{-i}) - \Gamma_i(b_{1}, s_{-i}) = &~ \frac{1}{2}(n-1)\delta + \frac{1}{2}\sum_{j\in [n]\setminus \{i\}}\Big(\beta\delta\cdot \Sigma^{A}_{i,j} + (1+\beta)\delta\cdot \Sigma^{B}_{i,j}\Big) \notag \\
&~ + \frac{1}{2}\sum_{j \in [n]\backslash \{i\}} \Big(1 - 2\Sigma^{A}_{i, j} - \Sigma^{B}_{i, j}\Big)x_{j} \pm O(n^2\delta^2)\label{eq:diff2}.
\end{align}
Let $\Delta_{i,1}$ and $\Delta_{i,2}$ be defined as in the statement of the lemma.
Note that $\Delta_{i, 1}$ does not depend on $\{x_j\}_{j\ne i}$ while $\Delta_{i, 2}$ depends on $\{x_j\}_{j \neq i}$.
It is easy to see that 
\begin{align}
\Delta_{i,1} \in \big[(n-1)\delta, 2n\delta\big] \quad \text{and} \quad  \Delta_{i, 2} = \big[-2(n-1)\beta\delta, (n-1)\beta\delta\big]\label{eq:range}.
\end{align}

Finally we can compute the jumping point $\tau_i$ using Lemma \ref{lem:formula} as follows:
\begin{align*}
\tau_i = &~  b_2 + \frac{\Gamma_i(b_{1}, s_{-i})\cdot (b_{2} - b_1)}{\Gamma_i(b_{2}, s_{-i}) - \Gamma_i(b_{1}, s_{-i})} \pm O\left(\frac{\eps}{\delta}\right) \\
= &~ b_2 + \frac{1 \pm O(n\delta)}{\Delta_{i,1}+\Delta_{i,2}} \cdot b_2 \pm O\left(\frac{\eps}{\delta}\right)\\
= &~ \left(\frac{\Delta_{i,1} - \Delta_{i,2}}{\Delta_{i,1}^2} \pm O\left(\frac{\beta^2}{\Delta_{i,1}}\right) \right)\cdot b_2 
\end{align*}
The second step follows from Eq.~\eqref{eq:diff2}, $\Gamma_i(b_1, \beta_i) = (1/2) \pm O(n\delta)$ (see Eq.~\eqref{eq:prob1}) and the choice of $b_1,b_2$.
The last step follows from Eq.~\eqref{eq:range}.
\end{proof}

\subsection{Reduction from generalized circuit}
\label{sec:reduce}

Given $\alpha<\beta$,
  we write $\mathsf{T}_{[\alpha, \beta]}: \R \rightarrow [\alpha, \beta]$ to denote the  truncation function with  $$\mathsf{T}_{[\alpha, \beta]}(x) = \min\big\{\max\{x, \alpha\}, \beta\big\}.$$ We recall the  generalized circuit problem  \cite{chen2009settling} and  present a simplified version from \cite{filos2021complexity}. 
\begin{definition}[(Simplified) generalized circuit]
	\label{def:generelized-circuit}
	A generalized circuit is a tuple $(V, G)$, where $V$ is a set of nodes and $G$ is a collection of gates.
    Each node $v \in V$ is associated with a gate $G_v$ that falls into one of two types $\{G_{1-}, G_{+}\}$: If $G_v$ is a $G_+$ gate, then it has two input nodes $v_1,v_2\in V\setminus \{v\}$; if it is a $G_{1-}$ gate then it takes one input node $v_1\in V\setminus \{v\}$.
Given $\kappa>0$,
a $\kappa$-approximation solution to $(V,G)$ is an assignment $x\in [0, 1]^{V}$such that for every node $v$:
     \begin{itemize}
    \item If $G_v$ is a $G_{+}$ gate and  takes input nodes $v_1, v_2 \in V\backslash \{v\}$, then $x_v = \mathsf{T}_{[0, 1]}(x_{v_1} + x_{v_2}\pm \kappa) $
    \item If $G_v$ is a  $G_{1-}$ gate and takes  an input node $v_1 \in V\backslash \{v\}$, then   $x_{v} = \mathsf{T}_{[0,1]}(1 - x_{v_1}\pm \kappa) $.
    \end{itemize}   

\end{definition}

The generalized circuit problem is known to be PPAD-hard for constant $\kappa$.
\begin{theorem}[\cite{rubinstein2015inapproximability,deligkas2022pure}]
There is a constant $\kappa>0$ such that it is PPAD-hard to find an $\kappa$-approximate solution of a generalized circuit.
\end{theorem}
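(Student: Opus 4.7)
Because this statement is a direct restatement of known PPAD-hardness results from \cite{rubinstein2015inapproximability} and \cite{deligkas2022pure}, my proof proposal will sketch the standard pipeline behind those works rather than attempt a new derivation. The plan is to start from a canonical PPAD-complete problem --- either approximating a Brouwer fixed point on $[0,1]^n$ or finding an $\eps$-Nash equilibrium of a $2$-player normal-form game --- and compile it down to the restricted circuit of Definition~\ref{def:generelized-circuit}.

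The first step is to invoke the classical construction of \cite{chen2009settling}, which encodes a Brouwer fixed point as a generalized circuit using a rich gate set, typically $\{G_+, G_-, G_\times, G_<, G_{1-}, G_{=}, G_{\mathrm{const}}\}$. This already gives constant-$\kappa$ PPAD-hardness for the rich-gate version. The remaining step, which is the technical content of \cite{rubinstein2015inapproximability, deligkas2022pure}, is to simulate every gate in this richer set by subcircuits built from only $G_+$ and $G_{1-}$. Several simulations are almost immediate: the identity gate is $G_{1-}\circ G_{1-}$; subtraction uses $x-y = 1 - ((1-x)+y)$; the constant $1/2$ arises as the self-consistent fixed point of $x = 1-x$; and other dyadic constants follow by iterated halving through $G_+$-based averaging gadgets.

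The delicate step is realizing multiplicative and comparison gates with the two-gate set while preserving constant precision. I would handle multiplication by dyadic constants via averaging gadgets, and thresholding/comparison via iterated alternation of $G_+$ and $G_{1-}$ whose composition pushes values toward $\{0,1\}$. After this, the circuit structurally mimics the original Chen--Deng--Teng construction but with only a polynomial blow-up in the number of nodes.

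The main obstacle, and the reason the theorem is nontrivial, is preserving a \emph{constant} error $\kappa$ across a simulated circuit of size $\poly(n)$: each simulated gate contributes additive slack, so a naive composition would inflate the error to $\poly(n)$ and destroy inapproximability. The key device in \cite{rubinstein2015inapproximability} is a family of averaging/robustness gadgets whose aggregation structure absorbs the error and keeps the accumulated slack bounded by an absolute constant independent of circuit size; \cite{deligkas2022pure} then streamlines the construction to the two-gate form stated here. For the present paper, all that matters downstream is the black-box hardness of $\kappa$-approximate generalized circuits with gates $\{G_+, G_{1-}\}$, which will be reduced to the first-price auction equilibrium problem in the subsequent subsection.
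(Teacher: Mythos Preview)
The paper does not prove this theorem at all: it is stated as a known result with citations to \cite{rubinstein2015inapproximability,deligkas2022pure} and then used as a black box for the reduction in Section~\ref{sec:reduce}. Your sketch of the underlying pipeline is therefore already more than what the paper provides, and no comparison with ``the paper's own proof'' is possible.

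One small correction on attribution within your sketch: the technical core of \cite{rubinstein2015inapproximability} is not the gate-set reduction to $\{G_+,G_{1-}\}$ but the constant-$\kappa$ hardness for the \emph{full} generalized-circuit gate set, obtained via PCP-style error-correcting averaging that prevents the $\poly(n)$ error blow-up you describe. The restriction to the two-gate version is a comparatively straightforward simulation (and is what \cite{filos2021complexity} and \cite{deligkas2022pure} package in the form quoted here). This does not affect the correctness of invoking the theorem; it just reallocates which reference carries which piece of the argument.
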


We prove Theorem \ref{thm:hardness} via a reduction from the generalized circuit problem.

Given an instance of generalized circuit defined over nodes set $V$ ($|V| = m$), we let $V_1 = [m_1]$ be the set of nodes with gate $G_{+}$  and $V_2 = [m_1+1: m]$  be the set of nodes with gate $G_{1-}$.
We construct an instance of first price auction with $n = m_1 + 2(m - m_1) = 2m - m_1$ standard players and one pivot player.

Let $\mN = \mN_1 \cup \mN_{2} \cup \mN_{3}$ be the set of standard players, where $ \mN_1 = [m_1]$, $ \mN_{2} = [m_1+1: m]$ and $\mN_{3} = [m+1: 2m - m_1]$.
From a high level, we use players in $\mN_1$ to represent the set of nodes with $G_{+}$ gates, players in $\mN_{2}$ to represent the set of nodes with $G_{1-}$ gates. 
Players in $\mN_{3}$ are used in constructing $G_{1-}$. 
We first specify the probability density $\tilde{p}$ over interval $(b_2, 1-\eps)$ and the tie-breaking matrices $\Sigma^A$ and $\Sigma^B$ to complete the description of the FPA instance.

\begin{itemize}
\item For player $i$ in $\mN_1$ (i.e., $i \in [m_1]$), its valuation distribution $\tilde{p}_i$ is uniform over the interval
\[
\left[\frac{1}{\Delta_{i, 1}} \cdot b_2, \left(\frac{1}{\Delta_{i, 1}} + \frac{1}{10}\cdot\frac{\beta\delta}{\Delta_{i, 1}^2}\right)\cdot b_2\right]
\]
with a total probability mass of $\beta\delta$. Let $i(1), i(2) \in [m]=\mN_1\cup \mN_2$ be the input nodes of the $G_{+}$ gates, we set $\Sigma^{B}_{i, i(1)} = \Sigma^{B}_{i, i(2)}= 1/10$.
\item For player $m_1 + j \in \mN_{2}$ (i.e., $j \in [m - m_1]$), its valuation distribution $\tilde{p}_{m_1+j}$ is uniform over 
\[
\left[\left(\frac{1}{\Delta_{m_1 + j, 1}} - \frac{1}{10}\cdot\frac{\beta\delta}{\Delta_{m_1 +j, 1}^2}\right)\cdot b_2, \frac{1}{\Delta_{m_1+j, 1}}\cdot b_2\right]
\] 
with a total probability mass of $\beta\delta$.
We set $\Sigma^{A}_{m_1 + j, m+ j} = 9/20.$
\item For player $m + j \in \mN_{3}$ (i.e., $j \in [m - m_1]$), its valuation distribution $\tilde{p}_{m+j}$ is uniform over  
\[
\left[\left(\frac{1}{\Delta_{m+j, 1}} + \frac{1}{10}\cdot\frac{\beta\delta}{\Delta_{m+j, 1}^2}\right)\cdot b_2, \left(\frac{1}{\Delta_{m+j, 1}} + \frac{1}{5}\cdot\frac{\beta\delta}{\Delta_{m+j, 1}^2}\right)\cdot b_2\right]
\]
with a total probability mass of $\beta\delta$. We set $\Sigma^{A}_{m + j, m_1+ j} = 11/20$. Let $j(1) \in [m]$ be the input node of $G_{1-}$, then set $\Sigma^{B}_{m+j, j(1)} =  {1}/{5}$.
\item For any entry of $\Sigma^{A}$ that has not been determined above, we set it to be $1/2$, and for any entry of $\Sigma^{B}$ that has not been determined, we set it to be $0$.
\end{itemize}

It is easy to verify that $\Sigma^{A}$ and $\Sigma^B$ satisfy the following properties as promised earlier: (1) the off-diagonal entries of $\Sigma^{A}$ lie in $[1/4, 3/4]$;  (2) $\Sigma^{A} + (\Sigma^{A})^{\top} = J_n - I_n$; and (3) the off-diagonal entries of $\Sigma^{B}$ belong to $[0,1/2]$.

Letting $\kappa: = n\beta = {1}/{n^3}$, we prove that any $\eps$-BNE of the first price auction gives an $O(\kappa)$-approximate solution to the generalized circuit.
Indeed the following lemma shows that by taking $x_{i}' =  {x_{i}}/{(\beta\delta)}$, where $x_i$ is the jumping probability of $s_i$, we obtain an $O(\kappa)$-approximate solution $(x_1',\ldots,x_m')$ to the input generalized circuit. This finishes the proof of Theorem \ref{thm:hardness}.
\begin{lemma}
Given an $\eps$-BNE of the first price auction and let $(x_1, \ldots, x_n)\in [0, \beta\delta]^{n}$ be the tuple of jumping probabilities, then we have
\begin{itemize}
\item For any $i \in [m_1]$, $x_{i} = \mathsf{T}_{[0, \beta\delta]}(x_{i(1)} + x_{i(2)} \pm O(\kappa\beta\delta))$
\item For any $j \in [m- m_1]$, $x_{m_1 + j} = \mathsf{T}_{[0, \beta\delta]}(\beta\delta - x_{j(1)} \pm O(\kappa\beta\delta))$
\end{itemize}
\end{lemma}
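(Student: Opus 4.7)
The plan is to use Lemma~\ref{lem:jump} to compute the jumping point $\tau_i$ for each player, then translate this back into the jumping probability $x_i$ via the explicitly chosen density $\tilde p_i$, and finally plug in the values of $\Sigma^A,\Sigma^B$ specified for each of the three classes $\mathcal{N}_1,\mathcal{N}_2,\mathcal{N}_3$.

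First I would record a clean formula relating $\tau_i$ and $x_i$. For every $i\in \mathcal{N}_1\cup\mathcal{N}_2\cup\mathcal{N}_3$ the density $\tilde p_i$ is uniform with total mass $\beta\delta$ on an interval of length $\frac{1}{10}\cdot\frac{\beta\delta}{\Delta_{i,1}^2}b_2$, so the density there equals $10\Delta_{i,1}^2/b_2$. Hence, clamping aside,
\[
x_i \;=\; \frac{10\,\Delta_{i,1}^2}{b_2}\bigl(\tau_i-a_i\bigr),
\]
where $a_i$ is the left endpoint of the support. Substituting the Lemma~\ref{lem:jump} expression $\tau_i=\bigl((\Delta_{i,1}-\Delta_{i,2})/\Delta_{i,1}^2\pm O(\beta^2/\Delta_{i,1})\bigr)b_2$ and the chosen $a_i$'s produces, after cancellation, the compact identities
\[
x_i = -10\,\Delta_{i,2} \pm O(n\beta^2\delta) \text{ for }i\in\mathcal{N}_1,\quad
x_i = \beta\delta -10\,\Delta_{i,2}\pm O(n\beta^2\delta) \text{ for }i\in\mathcal{N}_2,\quad
x_i = -\beta\delta -10\,\Delta_{i,2}\pm O(n\beta^2\delta) \text{ for }i\in\mathcal{N}_3.
\]
The error scale is obtained from $\Delta_{i,1}^2/b_2=O(n^4\delta)$ times the $O(\beta^2/\Delta_{i,1})$ and $O(\eps/\delta)$ error in Lemma~\ref{lem:jump}; both produce slack of order $\kappa\beta\delta$ with $\kappa=n\beta$, which absorbs the approximation losses. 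If $\tau_i$ falls outside the support interval then $x_i\in\{0,\beta\delta\}$ and the identity has to be read through the truncation operator $\mathsf T_{[0,\beta\delta]}$.

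Next I would plug in the specified tie-breaking entries to evaluate $\Delta_{i,2}=\sum_{j\ne i}(1-2\Sigma^A_{i,j}-\Sigma^B_{i,j})x_j$ in each case. For $i\in\mathcal{N}_1$ every $\Sigma^A_{i,j}=1/2$ by default, so $1-2\Sigma^A_{i,j}=0$, and the only nonzero $\Sigma^B_{i,j}$ are $\Sigma^B_{i,i(1)}=\Sigma^B_{i,i(2)}=1/10$. Thus $\Delta_{i,2}=-\tfrac{1}{10}(x_{i(1)}+x_{i(2)})$ and
\[
x_i \;=\; \mathsf T_{[0,\beta\delta]}\bigl(x_{i(1)}+x_{i(2)}\pm O(\kappa\beta\delta)\bigr),
\]
which is exactly the $G_+$ rule. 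For $i=m_1+j\in\mathcal{N}_2$ the only nonzero deviation is $\Sigma^A_{i,m+j}=9/20$ (the $\Sigma^B$ row is zero), giving $\Delta_{i,2}=\tfrac{1}{10}x_{m+j}$ and hence $x_{m_1+j}=\beta\delta-x_{m+j}\pm O(\kappa\beta\delta)$. For $i=m+j\in\mathcal{N}_3$, $\Sigma^A_{i,m_1+j}=11/20$ and $\Sigma^B_{i,j(1)}=1/5$, so $\Delta_{i,2}=-\tfrac{1}{10}x_{m_1+j}-\tfrac{1}{5}x_{j(1)}$ and $x_{m+j}=x_{m_1+j}+2x_{j(1)}-\beta\delta\pm O(\kappa\beta\delta)$.

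Finally, the two approximate equations for the pair $(x_{m_1+j},x_{m+j})$ form a $2\times 2$ linear system whose unique solution (modulo the error) is
\[
x_{m+j}=x_{j(1)},\qquad x_{m_1+j}=\beta\delta-x_{j(1)},
\]
yielding the desired $G_{1-}$ identity after truncation. The main obstacle is the last step: I must verify that when the truncation is active on one of $x_{m_1+j},x_{m+j}$, solving the linear system still gives the claimed $\mathsf T_{[0,\beta\delta]}$ relation; this requires a short monotonicity argument using that the coefficient of $x_{m+j}$ in the $\mathcal{N}_2$ equation and the coefficient of $x_{m_1+j}$ in the $\mathcal{N}_3$ equation are of opposite sign with bounded magnitude, so small errors propagate linearly and truncation commutes with the substitution up to $O(\kappa\beta\delta)$. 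All other steps are algebraic book-keeping of $\delta,\beta,b_2$ against the error terms supplied by Lemma~\ref{lem:jump}.
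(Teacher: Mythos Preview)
Your approach is essentially identical to the paper's: compute $\Delta_{i,2}$ from the specified $\Sigma^A,\Sigma^B$ entries, feed it through Lemma~\ref{lem:jump}, and read off $x_i$ from the uniform density $\tilde p_i$, with the paper doing the $\mathcal{N}_2$/$\mathcal{N}_3$ coupling by sequential substitution rather than as a $2\times 2$ system. Your truncation worry dissolves more easily than you suggest: since $x_{m+j}\in[0,\beta\delta]$ always, the argument $\beta\delta-x_{m+j}$ of the $\mathcal{N}_2$ truncation already lies in $[0,\beta\delta]$, so that truncation is inactive up to the $O(\kappa\beta\delta)$ error and you may substitute the untruncated identity directly into the $\mathcal{N}_3$ equation, after which a one-variable case analysis (interior vs.\ endpoints) finishes the job.
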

\begin{proof}
For the first claim, for any $i \in [m_1]$, one has
\begin{align*}
\Delta_{i, 2} = \sum_{r \in [n]\backslash \{i\}} \Big(1 - 2\Sigma^{A}_{i, r} - \Sigma^{B}_{ i, r}\Big)x_{r} = -\frac{1}{10}x_{i(1)}  - \frac{1}{10}x_{i(2)},
\end{align*}
where the second step follows from $\Sigma^{A}_{i, r} = 1/2$ for all $r\in [n]\backslash \{i\}$, $\Sigma^{B}_{i, i(1)} = \Sigma^{B}_{i, i(2)} = 1/10$ and $\Sigma^{B}_{i, r} = 0$ for all other $r \in [n]\backslash \{i, i_1, i_2\}$.
By Lemma \ref{lem:jump}, one has
\begin{align*}
\tau_i = \left(\frac{\Delta_{i,1} - \Delta_{i,2}}{\Delta_{i,1}^2} \pm O\left(\frac{\beta^2}{\Delta_{i,1}}\right)\right) \cdot b_2 = \left(\frac{1}{\Delta_{i,1}} + \frac{1}{10}\cdot \frac{x_{i(1)} + x_{i(2)}}{\Delta_{i,1}^2}  \pm O\left(\frac{\beta^2}{\Delta_{i,1}}\right)\right) \cdot b_2.
\end{align*}
Since $\D_{i}$ is uniform over $[\frac{1}{\Delta_{i, 1}} \cdot b_2, (\frac{1}{\Delta_{i, 1}} + \frac{1}{10}\cdot\frac{\beta\delta}{\Delta_{i, 1}^2})\cdot b_2]$ with probability mass $\beta\delta$, we have 
\[
x_i = \mathsf{T}_{[0, \beta\delta]}(x_{i(1)} + x_{i(2)} \pm O(\kappa\beta\delta)).
\]

For the second claim, we first analyse the jumping probability of player $m_1 + j$. We have
\begin{align*}
\Delta_{m_1 + j, 2} = \sum_{r \in [n]\backslash \{m_1+j\}} \Big(1 - 2\Sigma^{A}_{m_1+j, r} - \Sigma^{B}_{m_1 +j, r}\Big)x_{r} = \frac{1}{10}x_{m+ j}  ,
\end{align*}
where the second follows from $\Sigma^{A}_{m_1 + j, m+j} = 9/{20}$, $\Sigma^{A}_{m_1+j, r} = 1/2$ for all $r\in [n]\backslash \{m_1 + j,m+j \}$, and $\Sigma^{B}_{m_1+j, r} = 0$ for all $r \in [n]\backslash \{m_1 +j \}$. Hence, by Lemma \ref{lem:jump}, we have
\begin{align*}
\tau_{m_1 + j} = &~ \left(\frac{\Delta_{m_1 +j,1} - \Delta_{m_1 + j,2}}{\Delta_{m_1 + j,1}^2} \pm O\left(\frac{\beta^2}{\Delta_{m_1 + j, 1}}\right)\right) \cdot b_2\\
=&~ \left(\frac{1}{\Delta_{m_1 +j,1}} - \frac{1}{10}\cdot \frac{x_{m+ j}}{\Delta_{m_1+j,1}^2}  \pm O\left(\frac{\beta^2}{\Delta_{m_1+j,1}}\right)\right) \cdot b_2
\end{align*}
Given that $\D_{m_1+j}$ is uniform over $[(\frac{1}{\Delta_{m_1 + j, 1}} - \frac{1}{10}\cdot\frac{\beta\delta}{\Delta_{m_1 +j, 1}^2})\cdot b_2, \frac{1}{\Delta_{m_1+j, 1}}\cdot b_2]
$ with mass $\beta\delta$, we have 
\begin{align}
\label{eq:fix1}
x_{m_1+j} = \mathsf{T}_{[0, \beta\delta]}(\beta\delta - x_{m+ j} \pm O(\kappa\beta\delta)) .
\end{align}
It remains to analyse the jumping probability of player $m+j$, and we have 
\begin{align*}
\Delta_{m + j, 2} = &~ \sum_{r \in [n]\backslash \{m+j\}} \Big(1 - 2\Sigma^{A}_{m+j, r} - \Sigma^{B}_{m+j, r}\Big)x_{r}\\
= &~ -\frac{1}{10}x_{m_1+ j}   - \frac{1}{5}x_{j(1)} = -\frac{1}{10}\left(\beta\delta - x_{m + j} + 2x_{j(1)}\right) \pm O(\kappa\beta\delta).
\end{align*}
Here the second step follows from $\Sigma^{A}_{m + j, m_1+ j} = 11/20$, $\Sigma^{A}_{m + j, r} = 1/2$ for $ r\in [n]\backslash \{m_1+j, m+j\}$, $\Sigma^{B}_{m+j, j(1)} = 1/5$ and $\Sigma^{B}_{m + j, r}= 0$ for any $r\in [n]\backslash \{m+j, j(1)\}$. The last step follows from Eq.~\eqref{eq:fix1}. 

Now, by Lemma \ref{lem:jump}, one has
\begin{align*}
\tau_{m + j} = &~ \left(\frac{\Delta_{m+j,1} - \Delta_{m+j,2}}{\Delta_{m+j,1}^2} \pm O\left(\frac{\beta^2}{\Delta_{m+j,1}}\right)\right) \cdot b_2 \\
= &~ \left(\frac{1}{\Delta_{m+j,1}} + \frac{1}{10}\cdot \frac{\beta\delta - x_{m+ j} + 2x_{j(1)} \pm O(\kappa\beta\delta)}{\Delta_{m+j,1}^2}  \pm O\left(\frac{\beta^2}{\Delta_{m+j,1}}\right)\right) \cdot b_2
\end{align*}
Given that $\D_{m+j}$ is uniform over $[(\frac{1}{\Delta_{m+j, 1}} + \frac{1}{10}\cdot\frac{\beta\delta}{\Delta_{m+j, 1}^2})\cdot b_2,(\frac{1}{\Delta_{m+j, 1}} + \frac{1}{5}\cdot\frac{\beta\delta}{\Delta_{m+j, 1}^2})\cdot b_2]$ with mass $\beta\delta$, we conclude that
\begin{align*}
x_{m+j} = \mathsf{T}_{[0, \beta\delta]}(x_{j(1)} \pm O(\kappa \beta\delta)).
\end{align*}
Plugging into Eq.~\eqref{eq:fix1}, we obtain
\begin{align*}
x_{m_1+j} = \mathsf{T}_{[0, \beta\delta]}(\beta\delta - x_{j(1)} \pm O(\kappa \beta\delta)).
\end{align*}
This completes the proof of the second claim.
\end{proof}

\section{PTAS}
\label{sec:ptas}

We present a PTAS for computing an $\eps$-approximate BNE in an FPA under the uniform tie-breaking rule.

\PTAS*

Our approach proceeds in the following four steps. 
Given an FPA $(\mN, \mB, \D, \Gamma)$ where $\Gamma$ is the uniform tie-breaking rule, we first round the bidding space $\mB$ and reduce its size to $O(1/\eps)$. 
We then prune the valuation distribution $\D$ and work on a weak notion of $(\eps, \delta)$-approximate BNE that relaxes the no-overbidding requirement.
In the third step, we argue the existence of an $(\eps, \delta)$-approximate BNE profile over a discretized space and in the final step, we develop a suitable searching algorithm for $(\eps, \delta)$-approximate BNE in the discretized space.

\vspace{+2mm}
{\noindent \bf Step 1: Rounding bids. \ \ }
Given a first-price auction with bidding space $\mB = \{b_0, b_1, \ldots, b_m\}$ with $0 = b_0 < b_1 < \cdots < b_m \leq 1$, 
we define $\mB_{\eps} = \{b_{0, \max}, b_{1, \max}, \ldots, b_{10\eps^{-1}, \max}\}$ as follows.
First we take $b_{0, \max} = b_0=0$.
Then for each $t \in [10\eps^{-1}]$, let 
$b_{t, \max}$ be the maximum bid in
$$
\mB_t := \mB \cap \left(\frac{(t-1)\eps}{10}, \frac{t\eps}{10}\right]
$$
if $\mB_t$ is not empty; and set
$b_{t,\max}=\nil$ (meaning that we don't add an element to $\mB_\eps$) if $\mB_t$ is empty. 
We prove that it suffices to find an $({\eps}/{2})$-approximate BNE over bidding space $\mB_{\eps}$.
\begin{lemma}
\label{lem:ptas-step1}
Given any first-price auction $(\mN, \mB, \D, \Gamma)$, let $B_{\eps}$ be the rounded bidding space defined above. Then any $ ({\eps}/{2})$-approximate BNE  of  $(\mN, \mB_{\eps}, \D, \Gamma)$ is also an $\eps$-approximate BNE of $(\mN, \mB, \D, \Gamma)$.
\end{lemma}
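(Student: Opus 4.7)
The plan is to exploit a simple \emph{round-up} relation: for any bid $b\in \mB$ with $b>0$, $b$ lies in some interval $((t-1)\eps/10,\, t\eps/10]$, and by construction $b_{t,\max}\in \mB_\eps$ satisfies $b\le b_{t,\max}\le b+\eps/10$. (If $b=0$ then $b\in\mB_\eps$ already.) The key monotonicity observation is that for any strategy profile $s_{-i}$, the allocation probability $\Gamma_i(b,s_{-i})$ is weakly non-decreasing in $b$ (bidding higher can only help one win), while $(v_i-b)$ decreases by at most $\eps/10$ under rounding. So rounding up a bid costs at most $\eps/10$ in utility, provided $(v_i-b)$ stays nonnegative.

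My first step is the following pointwise claim: for every player $i$, every $v_i\in[0,1]$, and every $b\in\mB$,
\[
\max_{b^\star\in\mB_\eps} u_i(v_i;b^\star,s_{-i}) \ \ge\ u_i(v_i;b,s_{-i}) - \eps/10.
\]
I split into two cases. When $v_i-b\ge \eps/10$, take $b^\star = b$ if $b\in\mB_\eps$ and otherwise take $b^\star$ to be the round-up in $\mB_\eps$; then $v_i-b^\star\ge (v_i-b)-\eps/10\ge 0$ and $\Gamma_i(b^\star,s_{-i})\ge \Gamma_i(b,s_{-i})$, so
\[
u_i(v_i;b^\star,s_{-i})\ge \bigl((v_i-b)-\eps/10\bigr)\Gamma_i(b,s_{-i})\ge u_i(v_i;b,s_{-i})-\eps/10.
\]
When $v_i-b<\eps/10$, the utility of bid $b$ is at most $\eps/10$, whereas bidding $b^\star=0\in\mB_\eps$ gives utility $v_i\cdot\Gamma_i(0,s_{-i})\ge 0\ge u_i(v_i;b,s_{-i})-\eps/10$.

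The second step is to apply the pointwise claim to the best response in $\mB$. Noting that a best response in $\mB$ is w.l.o.g.\ at most $v_i$ (bidding above $v_i$ is dominated by bidding $0$), we may take $b=\bs_{\mB}(v_i,s_{-i})$, and then integrate over $v_i\sim \calD_i$ to obtain
\[
\E_{v_{-i}}\bigl[u_i(v_i;\bs_{\mB_\eps}(v_i,s_{-i}),s_{-i}(v_{-i}))\bigr]\ \ge\ \E_{v_{-i}}\bigl[u_i(v_i;\bs_{\mB}(v_i,s_{-i}),s_{-i}(v_{-i}))\bigr]-\eps/10,
\]
where $\bs_{\mB_\eps}$ denotes the best response in the restricted space. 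Combining with the hypothesis that $s$ is an $(\eps/2)$-approximate BNE of $(\mN,\mB_\eps,\calD,\Gamma)$ yields slack $\eps/2+\eps/10\le \eps$, which is exactly the $\eps$-approximate BNE condition for $(\mN,\mB,\calD,\Gamma)$. Since $\mB_\eps\subseteq \mB$, the strategy profile $s$ is a valid profile in the original auction, so nothing else needs to be checked.

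The argument is essentially routine; the only subtle point is the case $v_i-b<\eps/10$, where naively rounding up might produce a bid exceeding $v_i$ and hence a negative utility. Handling this by the fallback bid $0\in \mB_\eps$ is the only minor obstacle, and no additional structural property of the uniform tie-breaking rule is needed beyond monotonicity of $\Gamma_i(b,s_{-i})$ in $b$.
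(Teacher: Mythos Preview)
Your proof is correct and follows essentially the same approach as the paper: both argue that for any bid $b\in\mB$ one can find a bid in $\mB_\eps$ (either the round-up $b_{t,\max}$ or the fallback $b_0=0$) losing at most $\eps/10$ in utility, then combine with the $(\eps/2)$-approximate BNE hypothesis. Your case split on whether $v_i-b\ge \eps/10$ is a minor reformulation of the paper's split on whether $v_i\ge t\eps/10$, but the content is the same.
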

\begin{proof}
Let $s=(s_1,\ldots,s_n)$ be an $(\eps/2)$-approximate BNE of $(\mN, \mB_{\eps}, \D, \Gamma)$. By definition, each $s_i$ satisfies
\begin{align*}
\E_{v\sim \D}\big[u_i(v_i; s_i(v_i), s_{-i}(v_{-i})) - u_i(v_i; \bs_{\eps}(v_i, s_{-i}), s_{-i}(v_{-i}))\big] \geq -\frac{\eps}{2},
\end{align*}
where 
\[
\bs_{\eps}(v_i, s_{-i}) :=  \arg\max_{b\in \mB_{\eps}} \E_{v_{-i}\sim \D_{-i}}u_i(v_i; b, s_{-i}(v_{-i}))
\]
is the best response in $\mB_{\eps}$.
For each value $v_i \in [0,1]$, let 
\[
\eps_i(v_i) := \E_{v_{-i}\sim \D_{-i}}\big[u_i(v_i; s_i(v_i), s_{-i}(v_{-i})) - u_i(v_i; \bs_{\eps}(v_i, s_{-i}), s_{-i}(v_{-i}))\big]
\]
be the deficiency of utility at value point $v_i$ and we know that $\E_{v_i}[\eps_i(v_i)] \geq -{\eps}/{2}$. 

Let $\bs(v_i, s_{-i}) \in \mB$ be the best response in the original bidding space $\mB$.
Given that the case of $\bs(v_i, s_{-i})  = 0$ is trivial, we may assume without loss of generality that
  $$\bs(v_i, s_{-i}) \in \left(\frac{(t-1)\eps}{10}, \frac{t\eps}{10}\right],\quad\text{for some $t \in [10\eps^{-1}]$;}$$ hence $\mB_{t}\neq \emptyset$. 
Our goal is to prove that the revenue of bidding $b_{t,\max}$ or $0$ (both of which are in $\mB_\eps$) is at most $ {\eps}/{10}$ worse comparing to $\bs(v_{i}, s_{-i})$.

{\bf Case 1.}  Suppose $v_i \geq  {t\eps}/{10}$. Then it is valid to bid $b_{t, \max}$ and we have
\begin{align}
     \E_{v_{-i}\sim \D_{-i}}&\big[u_i(v_i; \bs_{\eps}(v_i, s_{-i}), s_{-i}(v_{-i})) - u_i(v_i; \bs(v_i, s_{-i}), s_{-i}(v_{-i}))\big]\notag \\
    \geq &~ \E_{v_{-i}\sim \D_{-i}}\big[u_i(v_i; b_{t, \max}, s_{-i}(v_{-i})) - u_i(v_i; \bs(v_i, s_{-i}), s_{-i}(v_{-i}))\big]\notag \\
    = &~ \E_{v_{-i}\sim \D_{-i}}[(v_i - b_{t, \max})\Gamma_i(b_{t, \max}, s_{-i}(v_{-i})) - (v_i - \bs(v_i, s_{-i}))\Gamma_i(\bs(v_i, s_{-i}), s_{-i}(v_{-i}))] \notag \\
    \geq &~ \bs(v_i, s_{-i}) - b_{t, \max}   \geq -\eps/10 \label{eq:step1-1}
\end{align}
where the first step follows from $b_{t, \max} \in \mB_{\eps}$, the third step follows from $v_i \geq b_{t, \max}$ and $$\Gamma_i(b_{t, \max}, s_{-i}(v_{-i})) \geq \Gamma_i(\bs(v_i, s_{-i}), s_{-i}(v_{-i})).$$ 

{\bf Case 2.}  Suppose $v_i \in (\frac{(t-1)\eps}{10}, \frac{t\eps}{10}]$ (note than we cannot have $v_i<(t-1)\eps/10$ due to the no overbidding assumption). In this case bidder $i$ may not be able to bid $b_{t, \max}$ due to the no-overbidding assumption. However, in this case the utility is small anyway so we can let bidder bid
  $b_0=0$ as follows:
\begin{align}
    \E_{v_{-i}\sim \D_{-i}}&\big[u_i(v_i; \bs_{\eps}(v_i, s_{-i}), s_{-i}(v_{-i})) - u_i(v_i; \bs(v_i, s_{-i}), s_{-i}(v_{-i}))\big]\notag \\
    \geq &~ 0 - \E_{v_{-i}\sim \D_{-i}}\big[u_i(v_i; \bs(v_i, s_{-i}), s_{-i}(v_{-i}))\big]\notag \\[0.8ex]
    = &~ \E_{v_{-i}\sim \D_{-i}}\big[(-v_i + \bs(v_i, s_{-i}))\Gamma_i(\bs(v_i, s_{-i}), s_{-i}(v_i))\big] \geq -\eps/10. \label{eq:step1-2}
\end{align}
Combining Eq.~\eqref{eq:step1-1} and Eq.~\eqref{eq:step1-2}, we have proved that
\[
\E_{v_{-i}\sim \D_{-i}}\big[u_i(v_i; s_i(v_i), s_{-i}(v_{-i})) - u_i(v_i; \bs(v_i, s_{-i}), s_{-i}(v_{-i}))\big] \geq \eps_i(v_i) - \frac{\eps}{10}.
\]
Taking an expectation over $v_i$, we complete the proof of the lemma.
\end{proof}

\vspace{+2mm}
{\noindent \bf Step 2: Rounding distribution. \ \ } 
Given a first-price auction $(\mN, \mB_{\eps}, \D, \Gamma)$, we would like to round the value distribution $\D_i$ such that it is supported over discrete values, and we truncate off the probability mass if it is too small. 
Formally, letting $\delta \in (0, \frac{\eps}{20})$ be a parameter to be specified later\footnote{Looking head, the parameter $\delta$ shall be much smaller than $\eps$}, we define $\D_{i}^{\eps, \delta}$ for each $i\in [n]$ as follows:
\begin{align*}
p_i^{\eps, \delta}(t) = \Pr_{v_i \sim \D_i^{\eps, \delta}}\left[v_i = \frac{t\eps}{10}\right] = 
\max\left\{0,\hspace{0.06cm}\int_{\frac{t\eps}{10}}^{\frac{(t+1)\eps}{10}}  \Pr_{v_i \sim \D_i}[v_i = v]\,\mathsf{d} v - \delta \right\} 
\end{align*}
for each $t\in [10\eps^{-1}-1]$,
and define
\begin{align*}
p_i^{\eps, \delta}(0) = \Pr_{v_i \sim \D_i^{\eps, \delta}}\big[v_i = 0\big] = 1- \sum_{t=1}^{10\eps^{-1}-1} \Pr_{v_i \sim \D_i^{\eps, \delta}}\left[v_i = \frac{t\eps}{10}\right].
\end{align*}
That is, the valuation distribution is rounded (down) to discrete values $V_{\eps} := \{0,  {\eps}/{10}, \ldots, 1 -  {\eps}/{10} \}$ and truncated at $\delta$; the extra probability mass is put on $0$.
From now on, we would consider both continuous and discrete valuation distribution of bidders.

Let $\D^{\eps, \delta} = \D_{1}^{\eps, \delta} \times \cdots \times \D_{n}^{\eps, \delta}$.
Next we define the notion of 
  $(\eps, \delta)$-approximate BNE --- a weaker notation of equilibrium with relaxed constraint on overbidding,
and show that it suffices to 
find an $(\eps,\delta)$-approximate BNE of the rounded FPA $(\mN, \mB_{\eps}, \D^{\eps, \delta}, \Gamma)$.

\begin{definition}[$(\eps, \delta)$-approximate Bayesian Nash equilibrium]
Let $n, m \geq 2$. Given a Bayesian FPA ($\mN, \mB_\eps, \D^{\eps, \delta},\Gamma$), a strategy profile $s = (s_1, \ldots, s_{n})$ is said to be an $(\eps, \delta)$-approximate Bayesian Nash equilibrium ($(\eps, \delta)$-approximate BNE), if for any player $i\in [n]$, its strategy $s_i$ is monotone and at most $\eps$-worse than the best response:  
\begin{align*}
    \E_{v \sim \D}\big[u_i(v_i; s_i(v_i), s_{-i}(v_{-i}))\big] \geq \E_{v\sim \D}\big[u_i(v_i; \bs(v_i, s_{-i}), s_{-i}(v_{-i}))\big] - \eps, 
\end{align*} 
Moreover, letting $v_{i, \max} = \max_{v_i \in V_{\eps}, p_i(v_{i, \max}) > 0} v_i$, the player $i$ never bids higher than $v_{i, \max}$ and its total overbidding probability is at most $\delta$, i.e.,
\begin{align*}
\Pr_{v_i\sim \calD_i^{\eps,\delta}}\big[s_i(v_i) > v_{i, \max}\big] = 0\quad\text{and}\quad\Pr_{v_i\sim \calD_i^{\eps,\delta}}\big[s_i(v_i) > v_i\big] < \delta.
\end{align*}
\end{definition}

We prove that it suffices to find an $(\eps, \delta)$-approximate BNE in the rounded FPA $(\mN, \mB_{\eps}, \D^{\eps, \delta}, \Gamma)$. 
In the proof we will perform the operation of converting a bidding strategy $s_i$ over bidding space $\mB_\eps$ and value distribution $\calD_i$ to a (unique) monotone bidding strategy $s_i'$ over the same bidding space $\mB_\eps$ but a different value distribution $\calD_i'$ such that their induced distributions over $\mB_\eps$ are the same:

\begin{definition}[Monotone analogue]
Let $s=(s_1,\ldots,s_n)$ be a strategy profile over bidding space $\mB_\eps$ and  value distribution $\D=\D_1\times\cdots\times \D_n$, and let $\D'=\D_1'\times \cdots\times \D_n'$ be another value distribution. The \emph{monotone analogue} of $s$  with respect to  $\D'$ (denoted as $\ms$) is defined as the unique monotone strategy profile $\ms=(\ms_1,\ldots,\ms_n)$ such that
\begin{align*}
\Pr_{v_i\sim \D_i'}\big[\ms_i(v_{i}) = b\big] = \Pr_{v_i\sim \D_i}\big[s_i(v_{i}) = b\big], \quad \text{for all }i\in [n]\ \text{and}\  b\in \mB_\eps.
\end{align*}
\end{definition}

\begin{lemma}
\label{lem:ptas-step2}
Given any FPA $(\mN, \mB_{\eps}, \D, \Gamma)$, let $\D^{\eps, \delta}$ be the rounded distribution defined above. Then the monotone analogue of any $(\eps, \delta)$-approximate BNE in $(\mN, \mB_{\eps}, \D^{\eps, \delta}, \Gamma)$ is an $(2\eps + 22\delta/\eps)$-approximate BNE in $(\mN, \mB_{\eps}, \D, \Gamma)$.
\end{lemma}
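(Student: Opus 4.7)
The plan is to construct a quantile coupling $\pi$ between $\D$ and $\D^{\eps,\delta}$ and use it to transport the $(\eps,\delta)$-BNE inequality for $s$ into an approximate-BNE inequality for $\ms$. For each $i$, I couple $(v_i,v_i^{\eps,\delta})\sim \D_i\times \D_i^{\eps,\delta}$ by matching quantiles $F_i(v_i)=F_i^{\eps,\delta}(v_i^{\eps,\delta})$, call this $\pi_i$, and take the product coupling $\pi=\pi_1\times\cdots\times\pi_n$. Two structural facts drive the argument. First, the construction of $\D_i^{\eps,\delta}$ (rounding down, truncating each discrete mass by at most $\delta$, and dumping the excess at $0$) gives $p_i^{\eps,\delta}(t\eps/10)\le \int_{t\eps/10}^{(t+1)\eps/10}p_i(v)\,\d v$ for every $t\ge 1$; summing and complementing yields $F_i^{\eps,\delta}((t-1)\eps/10)\ge F_i(t\eps/10)$, which implies $v_i\ge v_i^{\eps,\delta}$ almost surely under $\pi_i$. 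Second, the defining property of the monotone analogue forces $\ms_i(v_i)=s_i(v_i^{\eps,\delta})$ pointwise under $\pi_i$; hence under $\pi$ the entire bid profiles $\ms(v)$ and $s(v^{\eps,\delta})$ coincide, and so do the allocations $\Gamma_i$.

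Next I would bound the value gap. Under $\pi_i$, the rounding contributes at most $\eps/10$ to $v_i-v_i^{\eps,\delta}$, while the truncation redirects at most $10\delta/\eps$ total mass to $v_i^{\eps,\delta}=0$ (where $|v_i|\le 1$). So $\E_{\pi_i}[v_i-v_i^{\eps,\delta}]\le \eps/10+10\delta/\eps$; this single scalar controls the errors. Because bids and allocations coincide under $\pi$,
\[
u_i(v_i;\ms_i(v_i),\ms_{-i}(v_{-i}))-u_i(v_i^{\eps,\delta};s_i(v_i^{\eps,\delta}),s_{-i}(v_{-i}^{\eps,\delta}))=(v_i-v_i^{\eps,\delta})\,\Gamma_i\in[0,\,v_i-v_i^{\eps,\delta}],
\]
so the equilibrium-side utilities differ by at most $\eps/10+10\delta/\eps$ in expectation.

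For the best-response side, the key observation is that $\ms_{-i}(v_{-i})$ under $\D_{-i}$ and $s_{-i}(v_{-i}^{\eps,\delta})$ under $\D^{\eps,\delta}_{-i}$ induce the \emph{same} distribution over opponent bid profiles (this is exactly the marginal-matching condition). Hence the value function $H(v):=\max_{b\le v}\E_{v_{-i}\sim \D_{-i}}[u_i(v;b,\ms_{-i}(v_{-i}))]$ equals $\max_{b\le v}\E_{v_{-i}^{\eps,\delta}\sim \D^{\eps,\delta}_{-i}}[u_i(v;b,s_{-i}(v_{-i}^{\eps,\delta}))]$. A two-case analysis (whether the maximizer $b^\star$ for $v_i$ satisfies $b^\star\le v_i^{\eps,\delta}$ or $v_i^{\eps,\delta}<b^\star\le v_i$) yields the envelope bound $0\le H(v_i)-H(v_i^{\eps,\delta})\le v_i-v_i^{\eps,\delta}$. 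Taking expectations under $\pi$ and combining with the $(\eps,\delta)$-BNE inequality for $s$ gives
\[
\E_{v\sim\D}\big[u_i(v_i;\ms_i,\ms_{-i})-u_i(v_i;\bs(v_i,\ms_{-i}),\ms_{-i})\big]\ge -\eps-2\left(\frac{\eps}{10}+\frac{10\delta}{\eps}\right),
\]
which lies within the $2\eps+22\delta/\eps$ target.

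The last hurdle is strict no overbidding: $\ms$ inherits (via $\pi$) an overbidding set of $\D_i$-probability less than $\delta$ per player, namely the preimage under $\pi_i$ of the overbidding set of $s_i$. Modifying $\ms_i$ there to the largest bid $\le v_i$ preserves monotonicity; the direct effect on player $i$'s utility is $O(\delta)$, and the cross effects through $\Gamma_i$ are controlled by the total-variation shift of the opponents' bid distributions, which is also $O(\delta)$ per opponent. These can be absorbed by the remaining slack in the $22\delta/\eps$ term (given the regime $\delta\ll\eps$ imposed by the PTAS). The main obstacle I anticipate is the clean verification of the CDF inequality $F_i^{\eps,\delta}((t-1)\eps/10)\ge F_i(t\eps/10)$ underwriting the quantile coupling, together with accurately tallying the $10\delta/\eps$-mass truncation tail, where $v_i-v_i^{\eps,\delta}$ can be as large as $1$ and risks dominating the additive error if handled naively.
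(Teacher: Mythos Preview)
Your quantile-coupling argument for the \emph{utility} comparison is correct and in fact cleaner than the paper's route. The paper factors through the intermediate distribution $\D^{\eps}$ (rounding without truncation): Part~1 shows that passing from $\D^{\eps}$ to $\D$ costs $\eps/10$, and Part~2 shows that passing from $\D^{\eps,\delta}$ to $\D^{\eps}$ costs $20\delta/\eps$. Your single coupling collapses both steps into one envelope estimate $0\le H(v_i)-H(v_i^{\eps,\delta})\le v_i-v_i^{\eps,\delta}$ together with $\E[v_i-v_i^{\eps,\delta}]\le \eps/10+10\delta/\eps$, yielding the same numerical bound with less bookkeeping.

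The genuine gap is the no-overbidding step. The lemma asserts that the monotone analogue $\ms$ \emph{itself} is an approximate BNE, and the paper's definition of strategy requires strict no-overbidding. You observe the overbidding set of $\ms_i$ has $\D_i$-mass $<\delta$ and then propose to \emph{modify} $\ms$ on that set; this changes the object (so you are no longer proving the stated lemma), and the modification perturbs each opponent's bid distribution by $O(\delta)$ in TV, giving an $O(n\delta)$ cross-effect on $\Gamma_i$ that is \emph{not} absorbed by $22\delta/\eps$ for general $n$. What you are missing is that $\ms$ already does not overbid, and the $\delta$-truncation in $\D^{\eps,\delta}$ is engineered exactly to force this. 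The mechanism (which the paper establishes in its Part~2 via $\D^{\eps}$, and which works equally well in your coupling language) is: for every $b\in\mB_\eps$ with $0<b\le v_{i,\max}^{\eps,\delta}$, the truncation removes at least $\delta$ mass from the upper tail (the $\delta$ shaved off at the surviving top atom $v_{i,\max}^{\eps,\delta}$), so $\Pr_{\D_i^{\eps,\delta}}[v_i\ge b]\le \Pr_{\D_i}[v_i\ge b]-\delta$; combined with the $(\eps,\delta)$-BNE bound $\Pr_{\D_i^{\eps,\delta}}[s_i\ge b]<\Pr_{\D_i^{\eps,\delta}}[v_i\ge b]+\delta$, this gives $\Pr_{\D_i}[\ms_i\ge b]<\Pr_{\D_i}[v_i\ge b]$, i.e.\ no overbidding. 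Plugging this argument in place of your modification step completes the proof with no $n$-dependent loss.
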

\begin{proof}
For clarity of the presentation, we divide the proof into two parts. 

{\bf First part. \ } Define the distribution $\D_{i}^{\eps}$ for each $i\in [n]$ as follow:
\begin{align*}
p_i^{\eps}(t) = \Pr_{v_i \sim \D_i}\left[v_i = \frac{t\eps}{10}\right] = 
\int_{\frac{t\eps}{10}}^{\frac{(t+1)\eps}{10}}  \Pr_{v_i \sim \D_i}[v_i = v]\,\mathsf{d} v,  \quad  \text{ for each }t \in [10\eps^{-1} - 1]
\end{align*}
and 
\begin{align*}
p_i^{\eps}(0) = \Pr_{v_i \sim \D_i}\big[v_i = 0\big] = 1- \sum_{t=1}^{10\eps^{-1}-1} \Pr_{v_i \sim \D_i}\left[v_i = \frac{t\eps}{10}\right].
\end{align*}
The distribution $\D^{\eps}_i$ rounds down $\D_i$ to discrete values $V_{\eps}$.
Given any $\eps$-approximate BNE strategy $s$ of FPA $(\mN, \mB_{\eps}, \D^{\eps}, \Gamma)$, we  show that its monotone analogue $\ms$ (with respect to $\D$) is an $\frac{11}{10}\eps$-approximate BNE of FPA $(\mN, \mB_{\eps}, \D, \Gamma)$. For any $v\in [0,1]$, let $v^{\eps} = \lfloor\frac{10v}{\eps}\rfloor\cdot \frac{\eps}{10}$ be the closest while smaller discrete value in $V_{\eps}$.  For any player $i\in [n]$, we have
\begin{align*}
 \E_{v_{-i}\sim \D_{-i}} &\big[u_i(v_i; \ms_i(v_i), \ms_{-i}(v_{-i}))\big] - \E_{v_{-i}\sim \D_{-i}}\big[u_i(v_i; \bs(v_i, \ms_{-i}), \ms_{-i}(v_{-i}))\big] \\
\geq &~\E_{v_{-i}\sim \D_{-i}}\big[u_i(v_i^{\eps}; \ms_i(v_{i}), \ms_{-i}(v_{-i}))\big] - \E_{v_{-i}\sim \D_{-i}}\big[u_i(v_i^{\eps}; \bs(v_i, \ms_{-i}), \ms_{-i}(v_{-i}))\big] - \frac{\eps}{10}\\
= &~\E_{v_{-i}\sim \D_{-i}^{\eps}}\big[u_i(v_i^{\eps}; \ms_i(v_{i}), s_{-i}(v_{-i}))\big] - \E_{v_{-i}\sim \D_{-i}^{\eps}}\big[u_i(v_i^{\eps}; \bs(v_i, s_{-i}), s_{-i}(v_{-i}))\big] - \frac{\eps}{10}\\
\geq &~ \E_{v_{-i}\sim \D_{-i}^{\eps}}\big[u_i(v_i^{\eps}; \ms_i(v_{i}), s_{-i}(v_{-i}))\big] - \E_{v_{-i}\sim \D_{-i}^{\eps}}\big[u_i(v_i^{\eps}; \bs(v_i^{\eps}, s_{-i}), s_{-i}(v_{-i}))\big] - \frac{\eps}{10}.
\end{align*}
The first step holds since 
$u_i(v_i; \ms_i(v_i), \ms_{-i}(v_{-i})) \geq u_i(v_i^{\eps}; \ms_i(v_i), \ms_{-i}(v_{-i}))$ ($v_i \geq v_i^{\eps}$)
and 
\[
u_i(v_i; \bs(v_i, \ms_{-i}), \ms_{-i}(v_{-i})) - u_i(v_i^{\eps}; \bs(v_i, \ms_{-i}), \ms_{-i}(v_{-i})) \leq (v_i - v_i^{\eps}) \leq \frac{\eps}{10}.
\]
The second step holds because the bidding profile $s_{-i}(v_{-i})$ when $v_{-i}\sim \D_{-i}^{\eps}$ is identical to  $\ms_{-i}(v_{-i})$ when $v_{-i}\sim \D_{-i}$.

Taking an expectation over $v_i \sim \D_i$, we obtain
\begin{align*}
 \E_{v\sim \D} &[u_i(v_i; \ms_i(v_i), \ms_{-i}(v_{-i}))] - \E_{v\sim \D}\big[u_i(v_i; \bs(v_i, \ms_{-i}), \ms_{-i}(v_{-i}))\big] \\[0.6ex]
\geq &~\E_{v_i \sim \D_i}\E_{v_{-i}\sim \D_{-i}^{\eps}}\big[u_i(v_i^{\eps}; \ms_i(v_{i}), s_{-i}(v_{-i}))\big] - \E_{v_i \sim \D_i} \E_{v_{-i}\sim \D_{-i}^{\eps}}\big[u_i(v_i^{\eps}; \bs(v_i^{\eps}, s_{-i}), s_{-i}(v_{-i}))\big] - \frac{\eps}{10}\\
= &~ \E_{v_i \sim \D_i^{\eps}}\E_{v_{-i}\sim \D_{-i}^{\eps}}\big[u_i(v_i; \ms_i(v_{i}), s_{-i}(v_{-i}))\big] - \E_{v_i \sim \D_i^{\eps}} \E_{v_{-i}\sim \D_{-i}^{\eps}}\big[u_i(v_i; \bs(v_i, s_{-i}), s_{-i}(v_{-i}))\big] - \frac{\eps}{10}\\
\geq &~ -\eps - \frac{\eps}{10} = -\frac{11}{10}\eps.
\end{align*}
where the second step holds due to $v_{i}^{\eps} (v_i \sim \D_i)$ has identical distribution as $v_i (v_i \sim \D_{i}^{\eps})$.
It is easy to verify that $\ms$ does not overbid given that $s$ does not overbid as an $\eps$-approximate BNE. We conclude that $\ms$ is an $\frac{11}{10}\eps$-approximate BNE of FPA $(\mN, \mB_{\eps}, \D^{\eps}, \Gamma)$.

{\bf Second part. \ } We prove that, given any $(\eps, \delta)$-approximate BNE $s$ of FPA $(\mN, \mB_{\eps}, \D^{\eps, \delta}, \Gamma)$, its monotone analogue $\ms$ with respect to the distribution $\calD^\eps$ must be  an $(\eps + 20\delta/\eps)$-approximate BNE of $(\mN, \mB_{\eps}, \D^{\eps}, \Gamma)$.

To prove the correctness, one needs to verify for each player $i\in [n]$,
\begin{enumerate}
    \item $\ms_i$ is monotone,
    \item $\ms_i$ is at most $(\eps + 20\delta/\eps)$-worse than the best response; and
    \item $\ms_i$ does not overbid.
\end{enumerate} 

The first claim of monotonicity follows directly from the definition of $\ms$ as the monotone~analogue of $s$. 
To prove the second claim, a critical observation is that the bidding profile is always preserved, i.e., $s(v)$ ($v\sim \D^{\eps, \delta}$) and $\ms(v)$ ($v\sim \D^{\eps}$) are identical. 
As a consequence, we have
\begin{align}
\E_{v\sim \D^{\eps}}\big[u_i(v_i; \ms_i(v_i), \ms_{-i}(v_{-i}))\big] = &~ \E_{v \sim \D^{\eps}}\big[(v_i - \ms_i(v_i))\Gamma_{i}(\ms_i(v_i), \ms_{-i}(v_{-i}))\big]\notag\\
\geq &~ \E_{v \sim \D^{\eps,\delta}}\big[(v_i - s_i(v_i))\Gamma_{i}(s_i(v_i), s_{-i}(v_{-i}))\big]\notag\\
= &~\E_{v \sim \D^{\eps, \delta}}\big[u_i(v_i; s_i(v_i), s_{-i}(v_{-i}))\big]. \label{eq:step2-1}
\end{align}
The second step holds since (i) the expected payment is the same under $\ms$ and $s$, i.e.,
\begin{align*}
 \E_{v \sim \D^{\eps}}\big[\ms_i(v_i)\Gamma_{i}(\ms_i(v_i), \ms_{-i}(v_{-i}))\big] = \E_{v \sim \D^{\eps,\delta}}\big[s_i(v_i)\Gamma_{i}(s_i(v_i), s_{-i}(v_{-i}))\big] 
\end{align*}
and (ii) the distribution $\D_i^{\eps}$ stochastically dominates the distribution $\D_i^{\eps,\delta}$.

Similarly, we have
\begin{align}
\E_{v\sim \D^{\eps}}\big[u_i(v_i; \bs(v_i, \ms_{-i}), \ms_{-i}(v_{-i}))\big] = &~ \E_{v \sim \D^{\eps}}\big[(v_i - \bs(v_i, \ms_{-i}))\Gamma_{i}(\bs(v_i, \ms_{-i}), \ms_{-i}(v_{-i}))\big]\notag\\
\leq &~ \E_{v \sim \D^{\eps, \delta}}\big[(v_i - \bs(v_i, s_{-i}))\Gamma_{i}(\bs(v_i, s_{-i}), s_{-i}(v_{-i}))\big] + 20\delta/\eps \notag \\
= &~ \E_{v\sim \D^{\eps, \delta}}\big[u_i(v_i; \bs(v_i, s_{-i}), s_{-i}(v_{-i}))\big]+ 20\delta/\eps \label{eq:step2-2}.
\end{align}
where the second step holds since the TV distance between $\D^{\eps}$ and $\D^{\eps, \delta}$ is bounded by $20\delta/\eps$ and $\ms_{-i}(v_{-i})$ ($v_{-i}\sim \D_{-i}^{\eps}$) has the same distribution as $s_{-i}(v_{-i})$ ($v_{-i}\sim \D_{-i}^{\eps, \delta}$).
Combining Eq.~\eqref{eq:step2-1} and Eq.~\eqref{eq:step2-2}, we conclude the $\ms$ is at most $(\eps +20\delta/\eps)$-worse than the best response.

Finally we prove the last item that $\ms_i$ does not overbid. 
We prove by contradiction and suppose that at some bid $b \in \mB_{\eps}$, 
we have $\Pr_{v_i \sim \D^{\eps}_i}[\ms_i(v_i) = b > v_i] > 0$. Let $\nu_i \in [0, 1]$ be the smallest value such that $\Pr[\ms_i(\nu_i) = b] > 0$, and let  $v_{i, \max}^{\eps, \delta} = \max_{v_i \in V_{\eps}, p_i^{\eps, \delta}(v_{i, \max}^{\eps, \delta}) > 0} v_i$.
We note that $\nu_i < b \leq v_{i, \max}^{\eps,\delta}$ due to the definition of $(\eps, \delta)$-approximate BNE, i.e., there is no overbidding over $v_{i,\max}^{\eps,\delta}$. Hence, we have
\begin{align}
\Pr_{v_i \sim \D^{\eps, \delta}}\big[s_i(v_i) \geq b\big] = &~ \Pr_{v_i \sim \D^{\eps}}\big[\ms_i(v_i) \geq b\big] \geq \Pr_{v_i \sim \D^{\eps}}\big[v_i \geq \nu_{i}\big] \geq \Pr_{v_i \sim \D^{\eps, \delta}}\big[v_i \geq \nu_{i}\big] + \delta\notag\\
\geq &~ \Pr_{v_i \sim \D^{\eps, \delta}}\big[v_i \geq b\big] + \delta \label{eq:step2-3}
\end{align}
The first step follows from the definition of monotone analogue, the second step
holds since $\ms_i$ is monotone. The third step holds since $\nu_i < v_{i, \max}^{\eps, \delta}$ and $\D^{\eps,\delta}$ truncates at least $\delta$ probability mass at $v_{i, \max}^{\eps, \delta}$.
Eq.~\eqref{eq:step2-3} implies 
\[
\Pr_{v_i \sim \D^{\eps, \delta}}[v_i < b \wedge s_i(v_i) \geq b] \geq\delta,
\]
i.e., the overbidding probability is at least $\delta$. This contradicts with the fact that $s$ is an $(\eps, \delta)$-approximate BNE of FPA $(\mN, \mB_{\eps}, \D^{\eps}, \Gamma)$.
We complete the proof here.
\end{proof}

\vspace{+2mm}
{\noindent \bf Step 3: Existence of discretized $(\eps,\delta)$-approximate BNE. \ \ }
Given a first-price auction $(\mN, \mB, \D, \Gamma)$, we prove the existence of a (suitably) discretized $(\eps, \delta)$-approximate BNE.
We describe this step using a generic FPA $(\mN, \mB, \D, \Gamma)$ but will apply it on the rounded FPA $(\mN, \mB_{\eps}, \D^{\eps, \delta}, \Gamma)$ later.
For any $j \in [0:m]$, $i \in [n]$, let $p_{i, j} = \Pr[s_i(v_i) = b_j]$ be the probability of player $i$ bidding $b_j$ in a strategy profile $s$.

\begin{lemma}[Discretization]
\label{lem:ptas-step3} Let $m, n \geq 2$, given any first-price auction $(\mN, \mB, \D, \Gamma)$ and $\mB = \{b_0, b_1, \ldots, b_m\}$ with $0 = b_0 < b_1 < \cdots < b_m \leq 1$, 
there exists an $(\eps, \delta)$-approximate BNE strategy profile $s$ such that $p_{i, j}$ is a integer multiple of $$\ell(m)\cdot \frac{1}{ \eps^{6}\delta  }$$ for any $i \in [n]$ and $ j \in [0:m]$, where $\ell(m)$ is some exponential function of $m$.
\end{lemma}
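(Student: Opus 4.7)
The plan is to take an arbitrary $(\eps/2,\delta/2)$-approximate BNE $s^*$ of the rounded auction $(\mN,\mB_\eps,\D^{\eps,\delta},\Gamma)$, whose existence follows by a standard fixed-point argument on the compact convex polytope of monotone strategies with bounded overbidding, and then ``snap'' its bid probabilities $p^*_{i,j}$ to a grid of mesh $1/K$ with $K=\ell(m)\cdot\eps^{-6}\delta^{-1}$. The entire argument is a sensitivity-plus-rounding calculation, and the bulk of the work is bookkeeping the error that is introduced by rounding under the twin constraints of monotonicity and no-overbidding.

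The first step would be a sensitivity analysis. Under the uniform tie-breaking rule, for each $i$ and $j$ the allocation probability $\Gamma_i(b_j,s_{-i})$ can be written as
\[
\Gamma_i(b_j,s_{-i})=\sum_{T\subseteq [n]\setminus\{i\}}\frac{1}{|T|+1}\prod_{k\in T}p_{k,j}\prod_{k\notin T\cup\{i\}}F_k(b_{j-1}),
\]
a multilinear polynomial in the other players' bid probabilities $\{p_{k,j'}\}$ whose partial derivatives are all bounded by $1$ in absolute value. Hence perturbing every $p_{k,j'}$ by at most $1/K$ changes every $\Gamma_i(b_j,s_{-i})$ by at most $O(nm/K)$, and changes each expected utility $(v_i-b_j)\Gamma_i(b_j,s_{-i})$ by at most $O(nm/K)$. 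For $K\gg nm/\eps$ this is negligible against the $\eps$-slack in the equilibrium condition.

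The second step would be to carry out the rounding so as to respect the monotonicity and the discretized support. A monotone strategy on the discrete value space $V_\eps$ is characterized by the $(m+1)$-tuple of cumulative distribution values at each $v\in V_\eps$, subject to pointwise stochastic dominance across $v$. I would round these cumulative values greedily from the highest bid down, snapping each to the nearest multiple of $1/K$ and pushing the rounding residual into the adjacent value via a water-filling update; this keeps the full monotone structure intact and shifts each $p_{i,j}$ by at most $O(1/K)$ and each individual-value bid probability by at most $O(m/K)$. Because the total perturbation to $\Pr_{v_i}[s_i(v_i)>v_i]$ is then $O(m/K)$, choosing $K$ so that $m/K\ll\delta$ keeps the no-overbidding budget intact, and player $i$'s own utility also shifts by at most $O(m/K)$ via the same multilinearity argument applied to the term $(v_i-s_i(v_i))\Gamma_i(s_i(v_i),s_{-i})$.

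The main obstacle is pinning down the exact value of $K$, and in particular explaining the factor $\ell(m)$ that is exponential in $m$. The naive sensitivity bound sketched above only requires $K=\mathrm{poly}(n,m,1/\eps,1/\delta)$. The exponential factor enters because the simultaneous best-response conditions relate the $p_{i,j}$ across all players through polynomial identities of degree up to $n-1$, so to guarantee that a single rounded profile simultaneously satisfies every best-response inequality one needs the grid to be compatible with the polyhedral structure cut out by these degree-$O(m)$ polynomials. I would handle this either by an inductive rounding (rounding player by player and absorbing the accumulated propagation error into successive sensitivity bounds, which multiplies over the $m$ bids), or by a semi-algebraic argument invoking a Hadamard/determinantal bound on the denominators of rational points in the semi-algebraic set of $(\eps,\delta)$-approximate BNE. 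Either route produces a bound of the desired form $K=\ell(m)\cdot\eps^{-6}\delta^{-1}$ with $\ell(m)=2^{O(m)}$, which after Step~1 has already collapsed $m$ to $O(1/\eps)$ and is thus absorbed into $g(1/\eps)$ in the final running time.
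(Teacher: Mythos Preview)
Your rounding argument has a genuine gap: the sensitivity bound you derive, $O(nm/K)$, forces $K$ to scale with $n$, whereas the whole point of the lemma is that the grid mesh is independent of $n$. You write ``For $K\gg nm/\eps$ this is negligible,'' and later ``The naive sensitivity bound sketched above only requires $K=\mathrm{poly}(n,m,1/\eps,1/\delta)$,'' treating the $n$-dependence as harmless and focusing instead on why there is an exponential-in-$m$ factor. But $n$-independence is precisely what makes Step~4 a PTAS: the search is over profiles in $\sE_\omega$, whose size is $2^{\tilde O(m/\eps\omega)}$ only because $1/\omega$ does not grow with $n$. If $K$ scaled with $n$, the search space would be exponential in $n$ and the algorithm would not be a PTAS. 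Neither of your proposed fixes (inductive player-by-player rounding, or a semi-algebraic denominator bound) removes the $n$-dependence; both still accumulate error linearly in the number of players.

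The paper's proof avoids this by invoking the Daskalakis--Papadimitriou discretization theorem for sums of independent categorical random variables (their Theorem~3). That result says one can round every $p_i$ to a grid of mesh $2^{-m}/z$ so that the \emph{total variation distance} between $\sum_i X_i$ and $\sum_i \hat X_i$ is $O(h(m)\log z/z^{1/5})$, a bound that is independent of $n$. Because under uniform tie-breaking the allocation $\Gamma_i(b_j,s_{-i})$ is a function only of the histogram $\sum_{k\ne i}X_k$, closeness in TV of these histograms immediately gives closeness of all allocation probabilities and utilities, again with no $n$ factor. The exponential $\ell(m)$ arises from $h(m)$ in that theorem, not from any semi-algebraic or determinantal consideration. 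Your multilinear-derivative calculation is correct as far as it goes, but it is simply too weak: it treats each coordinate perturbation independently and sums, whereas the needed statement is a CLT-type structural result about how rounding many small independent summands barely moves the distribution of their sum.
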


We make use of the following result from \cite{daskalakis2015approximate}. We note this the major part that we need a uniform tie-breaking rule.
\begin{theorem}[Theorem 3 of \cite{daskalakis2015approximate}]
\label{thm:dis}
Let $p_i \in \Delta_{m+1}$ for $i\in [n]$, and let $\{X_i \in \R^{m+1}\}_{i\in[n]}$ be a set of independent $(m+1)$-dimensional random unit vectors such that, for all $i \in [n]$, $j \in [0:m]$, $\Pr[X_i = \mathsf{1}_{j}] = p_{i,j}$.  Let $z > 0$ be an integer. Then there exists another set of probability vectors $\{\hat{p}_{i}\in \Delta_{m+1}\}_{i \in [n]}$ such that the following conditions hold:
\begin{itemize}
\item $|\hat{p}_{i, j} - p_{i, j}| = O( {1}/{z})$, for all $i \in [n]$ and $ j \in [0:m]$;
\item $\hat{p}_{i,j}$ is an integer multiple of $\frac{1}{2^{m}} \cdot \frac{1}{z}$ for all $i \in [n]$ and $ j \in [0:m]$;
\item If $p_{i, j} = 0$ then $\hat{p}_{i, j} = 0$;
\item Let $\{\hat{X}_i \in \R^{m+1}\}_{i\in[n]}$ be a set of independent $(m+1)$-dimensional random unit vectors\\ such that $\Pr[X_i = \mathsf{1}_{j}] = \hat{p}_{i,j}$ for all $i \in [n]$, $j \in [0:m]$.Then
\begin{align}
\left\|\sum_{i\in [n]}X_i - \sum_{i\in [n]}\hat{X}_i\right\|_{\mathsf{TV}} = O\left(h(m)\cdot \frac{\log z}{z^{1/5}}\right). \label{eq:pro1}
\end{align}
Moreover, for all $i' \in [n]$, we have 
\begin{align}
\left\|\sum_{i \in [n]\setminus \{i'\}}X_{i} - \sum_{i \in [n]\setminus \{i'\}}\hat{X}_{i}\right\|_{\mathsf{TV}} = O\left(h(m)\cdot \frac{\log z}{z^{1/5}}\right). \label{eq:pro2}
\end{align}
where $h(m)$ is some exponential function
\end{itemize}
\end{theorem}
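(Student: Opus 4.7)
The plan is to establish this result via a two-stage approach: first, define a structured rounding operation on the probability vectors $\{p_i\}$ that trivially achieves conditions 1--3, then bound the total variation distance between the two Poisson-multinomial distributions via a decomposition into ``sparse'' and ``dense'' summands combined with a multivariate central limit theorem.

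For the rounding step, I would proceed as follows. For each player $i$ and each coordinate $j$ with $p_{i,j} > 0$, set $\hat{p}_{i,j}$ to be $p_{i,j}$ rounded down to the nearest integer multiple of $\tfrac{1}{2^m z}$; this automatically preserves the support (condition 3) and gives $|\hat{p}_{i,j} - p_{i,j}| \leq \tfrac{1}{2^m z}$. The residual mass $1 - \sum_j \hat{p}_{i,j}$, which is at most $\tfrac{m+1}{2^m z}$, is redistributed among the supported coordinates in integer multiples of $\tfrac{1}{2^m z}$. The factor $2^m$ in the denominator provides enough resolution so that the residual can always be absorbed without either violating $\hat{p}_{i,j} \in [0,1]$ or leaving the support. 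After this step, conditions 1, 2, and 3 are all satisfied with $|\hat{p}_{i,j} - p_{i,j}| = O(1/z)$.

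The main obstacle is proving Eq.~\eqref{eq:pro1} at the rate $O(h(m) \log z / z^{1/5})$. My plan is to split $[n]$ into the sparse group $S_{\mathrm{sp}} = \{i : 1 - \max_j p_{i,j} \leq 1/z^{2/5}\}$ and the dense group $S_{\mathrm{dn}} = [n] \setminus S_{\mathrm{sp}}$. For $i \in S_{\mathrm{sp}}$, the probability that $X_i$ or $\hat{X}_i$ deviates from its modal coordinate is small, so a coupling via the maximal coupling on each coordinate yields $\|\sum_{i \in S_{\mathrm{sp}}} X_i - \sum_{i \in S_{\mathrm{sp}}} \hat{X}_i\|_{\mathsf{TV}} = O(n/z) \cdot |S_{\mathrm{sp}}|/z^{2/5}$ after absorbing polynomial factors in $m$ into $h(m)$. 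For the dense group, I would invoke a multidimensional CLT: both $\sum_{i \in S_{\mathrm{dn}}} X_i$ and $\sum_{i \in S_{\mathrm{dn}}} \hat{X}_i$ are close in TV to discretized Gaussians with means and covariances matching the parameters of the respective $p_i$ and $\hat{p}_i$, and since the rounding perturbs each mean and covariance entry by $O(1/z)$ while the minimum eigenvalue of the covariance is $\Omega(|S_{\mathrm{dn}}|/z^{2/5})$ on the dense side, a local CLT (e.g., via characteristic-function smoothing) gives the $z^{1/5}$ rate. Convolving the sparse and dense parts and applying subadditivity of TV distance concludes Eq.~\eqref{eq:pro1}.

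Finally, Eq.~\eqref{eq:pro2} follows by applying exactly the same argument to the $(n-1)$-summand PMD obtained by removing player $i'$. The hardest technical step will be establishing the local CLT with the correct polynomial dependence on $m$; one natural approach is to bound $\|\phi_{X} - \phi_{\hat{X}}\|_\infty$ on a suitable frequency window using moment-matching (the first two moments match up to $O(n/z)$) together with a tail bound that exploits the eigenvalue lower bound on the dense-group covariance. The logarithmic factor $\log z$ in the final bound would arise from the Fourier inversion integral over the window.
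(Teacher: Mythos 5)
The paper does not actually prove this statement: it is imported verbatim as Theorem~3 of \cite{daskalakis2015approximate} and used as a black box in Step~3 of the PTAS, so the relevant comparison is with the proof of that cited theorem, not with anything internal to this paper.

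Your sketch has a genuine gap, and it is precisely the point that makes the cited theorem nontrivial: the bound $O(h(m)\log z/z^{1/5})$ is independent of $n$, but every step of your argument rounds and couples \emph{per player}, which inevitably produces errors that accumulate with $n$. Concretely, take $m=1$ and $p_{i,1}=\frac{1}{2}\cdot\frac{1}{2^m z}$ for every $i$ (half the grid spacing). Any rounding of each player's vector to the grid individually must send $\hat{p}_{i,1}$ to $0$ or to $\frac{1}{2^m z}$, so the aggregate mean $\sum_i \hat{p}_{i,1}$ is either $0$ or $2\sum_i p_{i,1}$; once $n/z \gg 1$ the two sums $\sum_i X_i$ and $\sum_i \hat{X}_i$ have total variation distance close to $1$. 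The same issue defeats your sparse-group coupling (your bound already contains factors of $n$ and $|S_{\mathrm{sp}}|$, which cannot be absorbed into $h(m)$) and your dense-group CLT step: per-player mean perturbations of $O(1/z)$ aggregate to $O(|S_{\mathrm{dn}}|/z)$, which is not small compared to the standard deviation $\sqrt{|S_{\mathrm{dn}}|/z^{2/5}}$ once $|S_{\mathrm{dn}}|\gg z^{8/5}$. The missing idea in the actual proof is that the rounding must be \emph{correlated across players} so as to (approximately) preserve the aggregate statistics that determine the distribution of the sum: for coordinates where the $p_{i,j}$ are small one applies Stein--Chen Poisson approximation, so only the sums $\sum_i p_{i,j}$ matter and the rounding balances round-ups against round-downs to keep these sums on the grid; for the remaining coordinates one compares both sums to discretized multivariate Gaussians, so only aggregate means and covariances must be matched, and the rounding is chosen to do exactly that. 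The threshold between the two regimes and the CLT error are then traded off to give the $z^{1/5}$ rate with the $\log z$ factor. Your instinct to split into a Poisson-like regime and a CLT regime is in the right spirit, but splitting \emph{players} by $1-\max_j p_{i,j}$ and rounding each player independently cannot yield an $n$-independent bound.
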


\begin{proof}[Proof of Lemma \ref{lem:ptas-step3}]
Given a BNE strategy profile $s$ of FPA $(\mN, \mB, \D, \Gamma)$, we take 
\[
z = \Omega\Big(\max\big\{2h(m)^6 \eps^{-6}, 2m^2\delta^{-1}\big\}\Big) \quad \text{and} \quad \gamma =  h(m)\cdot \frac{\log z}{z^{1/5}}.
\]
Let $\{p_{i} \in \Delta^{m+1}\}_{i \in [n]}$ be the probability vectors that correspond to $s$, i.e., $p_{i,j}=\Pr[s_i(v_i)=b_j]$.
Using Theorem \ref{thm:dis}, let $\{\hat{p}_{i} \in \Delta^{m+1}\}_{i \in [n]}$ be the set of discretized probability vectors, and let $\hat{s}$ be the unique monotone strategy determined by $\hat{p}$ (with respect to the same value distribution $\calD$).
We prove that $\hat{s}$ forms an $(\eps, \delta)$-approximate BNE of the auction. We need to verify that for every player $i\in [n]$, (1) $\hat{s}_{i}$ is at most $\eps$-worse than the best response; and (2) the overbidding probability is small.

By Eq.~\eqref{eq:pro2}, for any $j \in [0:m]$, one has 
\begin{align}
\E_{v_{-i}\sim \D_{-i}}\big[\Gamma_i(b_j, s_{-i}(v_{-i})) \big]= \E_{v_{-i}\sim \D_{-i}}\big[\Gamma_i(b_j, \hat{s}_{-i}(v_{-i}))\big] \pm \gamma.\label{eq:step3-1}
\end{align}
since the allocation probability (under the uniform tie-breaking) is determined by the histogram of other players' bidding histogram, which shifts by at most $\gamma$ between $s_{-i}$ and $\hat{s}_{-i}$ in total variance distance. 

For the utility, we have
\begin{align}
\E_{v \sim \D} \big[u_{i}(v_i; \hat{s}_i(v_i), \hat{s}_{-i}(v_{-i}))\big] = &~ \E_{v \sim \D} \big[(v_{i} - \hat{s}_i(v_i)) \cdot \Gamma_i(\hat{s}(v_i), \hat{s}_{-i}(v_{-i}))\big]\notag\\
= &~ \E_{v \sim \D} \big[(v_{i} - \hat{s}_i(v_i)) \cdot \Gamma_i(\hat{s}_i(v_i), s_{-i}(v_i))\big] \pm \gamma \notag\\
= &~ \E_{v \sim \D} \big[(v_{i} - s_i(v_i)) \cdot \Gamma_i(s_i(v_i), s_{-i}(v_i))\big] \pm O\left(m^2\cdot \frac{1}{z}+\gamma\right) \notag \\
= &~ \E_{v \sim \D} \big[u_{i}(v_i; s(v_i), s_{-i}(v_{-i}))\big] \pm \frac{\eps}{2}.\label{eq:step3-2}
\end{align}
The second step follows from Eq.~\eqref{eq:step3-1}, the third step holds since the TV distance between $(v_i, s_i(v_i))$ and $(v_i, \hat{s}_i(v_i))$ is at most $O(m^2 \cdot \frac{1}{z})$.
The last step follows from the choice of $z$.

At the same time, we have
\begin{align}
\E_{v \sim \D} \big[u_{i}(v_i; \bs(v_i, \hat{s}_{-i}), \hat{s}_{-i}(v_{-i}))\big] \leq &~ \E_{v \sim \D} \big[u_{i}(v_i, \bs(v_i, \hat{s}_{-i}), s_{-i}(v_{-i}))\big] +\gamma \notag \\
\leq &~ \E_{v \sim \D} \big[u_{i}(v_i, \bs(v_i, s_{-i}), s_{-i}(v_{-i}))\big] +\gamma \notag \\
\leq &~ \E_{v \sim \D} \big[u_{i}(v_i; \bs(v_i, s_{-i}), s_{-i}(v_{-i})) \big]+\frac{\eps}{2}. \label{eq:step3-3}
\end{align}
Here the first step follows from the bidding histogram shifts by at most $\gamma$ between $s_{-i}$ and $\hat{s}_{-i}$ in total variance distance. The last step follows from the choice of parameters.

Combining Eq.~\eqref{eq:step3-2} and Eq.~\eqref{eq:step3-3}, we have that $\hat{s}_i$ is at most $\eps$-worse than the best response.

To bound the probability of overbidding, let $j(i) \in [0:m]$ be the maximum bid that receives non-zero probability under $\hat{s}_i$, then it is easy to verify that  $b_{j(i)} \leq v_{i,\max}$: otherwise $b_{j(i)}>v_{i,\max}$ and $p_{i, j(i)}>0$ would contradict with the no overbidding assumption of BNE on $s$.  

Finally we have
\begin{align*}
\Pr_{v_i\sim\calD_i}\big[\hat{s}_i(v_i) > v_i\big] \leq \Pr_{v_i\sim\calD_i}\big[ {s}_i(v_i) > v_i\big] + O\left(m^2 \cdot \frac{1}{z}\right) = 0 + O\left(m^2 \cdot \frac{1}{z}\right) \leq \delta
\end{align*}
since the total variation distance between $(v_i, s_i(v_i))$ and $(v_i, \hat{s}_i(v_i))$ is at most $O(m^2 \cdot \frac{1}{z})$.
\end{proof}



\vspace{+2mm}
{\noindent \bf Step 4: Searching for an $(\eps,\delta)$-approximate BNE. \ \ } 
Finally we provide a simple searching algorithm for $(\eps,\delta)$-approximate BNE over the discretized space. The key observation comes from the allocation rule of a first-price auction, i.e., only players with the highest bid could win the item. 
We say a strategy profile $s$ lies in the grid $S_{\omega}$ for some $\omega \in (0,1)$ if $p_{i, j} = \Pr_{v_i \sim \D_i}[s_i(v_i) = b_j]$ is a multiple of $\omega$ for every $i \in [n]$ and $j \in [0:m]$.

\begin{lemma}
\label{lem:ptas-step4}
Given a first-price auction $(\mN, \mB, \D, \Gamma)$, and suppose there exists at least one $(\eps, \delta)$-approximate BNE over the grid $S_{\omega}$, then 
there is an algorithm that runs in $n^4m \cdot 2^{\tilde{O}(m/\eps\omega)}$ time and returns an $(2\eps, \delta)$-approximate BNE under the uniform tie-breaking rule.
\end{lemma}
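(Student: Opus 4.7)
The plan is a brute-force enumeration over a discretized aggregate summary of the strategy profile, verifying each candidate against the $(2\eps,\delta)$-approximate BNE conditions. The crucial structural observation, echoing the hint that ``only players with the highest bid can win,'' is that under the uniform tie-breaking rule and independence of value distributions, the allocation probability admits the closed form
\begin{align*}
\Gamma_i(b_j, s_{-i}) = \int_0^1 \prod_{i'\neq i}\Big((1-x)\, F_{i'}(b_{j-1}) + x\, F_{i'}(b_j)\Big)\,\mathrm{d}x,
\end{align*}
obtained by writing $\tfrac{1}{k+1}=\int_0^1 x^k\,\mathrm{d}x$ and expanding over which set of other players ties at $b_j$ with no one bidding higher. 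Thus, once the CDF values $\{F_{i'}(b_j)\}_{i',j}$ (determined from the bid probabilities $\{p_{i',j}\}$) are known, every $\Gamma_i(b_j,s_{-i})$ can be computed exactly in $O(nm)$ time.

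First, I would enumerate a discretized aggregate summary of the hypothesized $(\eps,\delta)$-approximate BNE $s^*$ on $S_\omega$. A natural candidate summary consists, for each bid level $j$, of the aggregate statistics needed to reconstruct the integrand above (for instance, the sorted multiset of $(F_i(b_{j-1}), F_i(b_j))$, or equivalently the Poisson--binomial type-vector arising from independent bids), quantized at precision on the order of $\omega\eps/m$. Invoking a moment-matching argument in the spirit of Theorem~\ref{thm:dis}, the effective number of such summaries is $2^{\tilde O(m/(\eps\omega))}$, independent of $n$. For each enumerated summary, I reconstruct the implied allocation $\hat\Gamma_i(b_j)$ for every $(i,j)$ by ``subtracting off'' player $i$'s own contribution (in $O(n^2m)$ time per candidate) and set $\hat s_i(v) := \arg\max_{b_j\leq v}(v-b_j)\hat\Gamma_i(b_j)$ at each value $v$ in the support of $\D_i$, assembling a monotone candidate profile $\hat s=(\hat s_1,\ldots,\hat s_n)$.

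Next, I verify in $O(n^3 m)$ time whether $\hat s$ is a $(2\eps,\delta)$-approximate BNE by plugging $\hat s$ into the closed form above to compute every $\Gamma_i(b_j,\hat s_{-i})$ exactly, and comparing the utility of $\hat s_i$ to the optimal pointwise best response. Correctness: by assumption some $s^*$ on $S_\omega$ is an $(\eps,\delta)$-approximate BNE; its aggregate summary $\Sigma^*$ is within $\omega\eps/m$ of one enumerated $\Sigma$; by Lipschitz continuity of the integral expression in the CDFs, the corresponding $\hat\Gamma_i$ differs from $\Gamma_i(\cdot,s^*_{-i})$ by at most $\eps/m$; hence the pointwise best response under $\hat\Gamma$ loses at most $\eps$ relative to that under $s^*$, and combined with the $\eps$-slack already built into $s^*$, $\hat s$ is a $(2\eps,\delta)$-approximate BNE. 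Multiplying the $2^{\tilde O(m/(\eps\omega))}$ enumerations by the $O(n^4 m)$ per-candidate work (construction plus verification) yields the claimed running time.

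The main obstacle is choosing the summary and its quantization so that both (i) the enumeration remains $2^{\tilde O(m/(\eps\omega))}$ and (ii) the reconstructed $\hat\Gamma_i$'s determine each player's best response up to additive error $\eps/m$. Since $\hat s_i$ is defined by a pointwise $\arg\max$, a stability analysis is required: small perturbations of the summary could in principle shift the argmax, but operating on the coarse grid $S_\omega$ and bidding space $\mB_\eps$ controls this, provided the Lipschitz estimates on the integrand and the balance between $\omega$, $\eps$, and $m$ in the precision $\omega\eps/m$ are tight enough. Pinning down this balance — and confirming that the resulting summary is indeed reconstructible from $O(m/(\eps\omega))$ bits — is the crux of the proof.
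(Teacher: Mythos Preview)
Your integral identity for $\Gamma_i(b_j,s_{-i})$ is valid and useful for evaluation, but the plan built on it has two genuine gaps.

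First, the enumeration bound. You claim the aggregate summary can be quantized so that only $2^{\tilde O(m/(\eps\omega))}$ many survive, citing a moment-matching result ``in the spirit of Theorem~\ref{thm:dis}.'' But Theorem~\ref{thm:dis} only shows that each profile can be rounded to a nearby grid profile with close histogram law; it does not bound the number of distinct histogram laws (or CDF multisets) by anything independent of $n$. A sorted multiset of $n$ CDF vectors on a grid of width $\omega$ has, a priori, on the order of $n^{(1/\omega)^{O(m)}}$ possibilities, and the Poisson--binomial ``type'' machinery from anonymous games yields running time $n^{g(m,1/\eps)}$, not $n^{O(1)}\cdot 2^{g(m,1/\eps)}$. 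The paper obtains the stated bound by a completely different, auction-specific argument: once the total mass $\sum_i p_{i,j}$ at some bid level reaches $10/\eps$, a Chernoff bound forces the allocation probability at every lower bid to be at most $\eps/4$, so all mass below that level can be moved to $b_0$ at cost $\eps$ in utility. Hence at most $R=O(m/(\eps\omega))$ players carry any nonzero mass above $b_0$ in the truncated profile, and one can enumerate the $R$-tuple of their grid strategies directly; this is what yields the $2^{\tilde O(m/(\eps\omega))}$ count with no $n$ in the exponent.

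Second, your reconstruction $\hat s_i(v):=\arg\max_{b_j\le v}(v-b_j)\hat\Gamma_i(b_j)$ does not close the loop. Even granting that $\hat\Gamma_i$ is $\eps/m$-close to $\Gamma_i(\cdot,s^*_{-i})$, the profile $\hat s$ you assemble is a tuple of best responses to $s^*$, not to $\hat s$ itself; there is no reason $\hat s_{-i}$ should induce allocations close to $\hat\Gamma_i$, since $\hat s_{i'}$ may differ arbitrarily from $s^*_{i'}$ whenever $s^*_{i'}$ uses a bid that is only $\eps$-optimal rather than exactly optimal. The paper sidesteps this fixed-point issue by never reconstructing strategies from a summary: having fixed an $R$-tuple of candidate strategies (augmented with $n-R$ copies of ``always bid $b_0$''), it sets up a bipartite matching between the $n$ players and these $n$ strategies, drawing an edge whenever that strategy is a $2\eps$-best response for that player against the already-fixed profile and satisfies the overbidding bound. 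A perfect matching then certifies a $(2\eps,\delta)$-approximate BNE, and the existence argument above guarantees that the enumeration corresponding to the truncated $s^*$ admits one.
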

\begin{proof}
Let $R = 100(m+1)/\eps\omega$. The discretized strategy profiles $\sE_{\omega} \subseteq [\Delta_{m+1}]^{R}$ are defined as follows. A strategy profile $(p_1, \ldots, p_{R}) \in \sE_{\omega}$ is parameterized by a bid level $j^{*} \in [0:m-1]$ and  $k_0,k_1, \ldots, k_{j^{*}} \in [0:10\eps^{-1} ]$ such that $p_{i, j}$ is a multiple of $\omega$ for all $i \in [n], j \in [0:m]$, and
\begin{align*}
\sum_{r=1}^{R} p_{r, m - j} = k_j,\, \forall j \in [0: j^{*}] \quad \text{and} \quad p_{r, m - j} = 0, \, \forall r\in [R],j \in [j^{*}+ 1: m-1].
\end{align*}

The size of $\sE_{\omega}$ satisfies
\[
|\sE_{\omega}| \leq (m+1) \cdot (10/\eps\omega + 1/\omega)^{m+1} \cdot \binom{R}{10/\eps\omega + 1/\omega} \leq 2^{\tilde{O}(m/\eps\omega)}.
\]

For any strategy profile $(p_1, \ldots, p_R) \in \sE_{\omega}$, one can augment with $(n-R)$ default strategies $(1,0, \ldots, 0)$ (that is, bidding $b_0=0$ with probability $1$). 
Slightly abuse of notation, we also use $\sE_{\omega} \subseteq [\Delta_{m+1}]^{n}$ to denote the augmented strategy profiles. We shall prove 
\begin{itemize}
\item There is an $(2\eps, \delta)$-approximate BNE in $\sE_{\omega}$ (up to a matching with players), and
\item One can identify the matching efficiently.
\end{itemize}

\paragraph{Existence of $(2\eps, \delta)$-approximate BNE} Recall that there exists an $(\eps, \delta)$-approximate BNE strategy $s$ over the grid $S_{\omega}$. Let $j^{*}$ be the first index over $[0:m]$ such that $\sum_{i\in [n]} p_{i,m-j} \leq 10/\eps$ ($\forall j < j^{*}$) and $\sum_{i\in [n]} p_{i,m- j^{*}} \geq 10/\eps$.
If $j^{*} = m$, then we have $s \in \sE_{\omega}$ (up to a matching between players).
If $j^{*} \leq m-1$, then let $n^{*} \in [n]$ be the smallest player such that $\sum_{i\in [n^{*}]} p_{n,m-j^{*}} \geq 10/\eps$ (w.l.o.g. we assume it takes the equality). 
Truncate the strategy profile to $s'$ such that
\begin{align*}
p'_{i, j} = \Pr_{v_i\sim \D_i}[s_i'(v_i) = b_j] = \left\{
\begin{matrix}
p_{i, j} & j > m - j^{*} \vee (j = m - j^{*} \wedge i \leq n^{*})\\
0 & j \in [1: m - j^{*} - 1] \vee (j = m - j^{*} \wedge i > n^{*})\\
1 - \sum_{j' =1}^{m}p'_{i, j'} & j = 0.
\end{matrix}
\right.
\end{align*}

It is clear the new strategy $s' \in \sE_{\omega}$ (up to a matching between players), and for each player $i\in [n]$, the new strategy $s_i'$ is monotone and the probability of overbidding is no more than $\delta$ (because we only move bidding probability to $b_0 = 0$). It suffices to prove it is at most $2\eps$-worse than the best response. 
The key observation is that the allocation probability of bidding $b_{j}$ with $j > m- j^{*}$ remains the same, i.e., 
\begin{align*}
\E_{v_{-i}\sim \D_{-i}}[\Gamma_{i}(b_j, s_{-i}'(v_{-i}))] = \E_{v_{-i}\sim \D_{-i}} [\Gamma_{i}(b_j, s_{-i}(v_{-i}))] \quad \forall j \in [m- j^{*} + 1: m]
\end{align*}
and moreover, the allocation probability of bidding no more than $b_{m-j^{*}}$ is small
\begin{align*}
\E_{v_{-i}\sim \D_{-i}}[\Gamma_{i}(b_j; s_{-i}'(v_{-i}))] \leq \eps/4 \quad \forall j \in [0: m - j^{*}].
\end{align*}
This holds since with probability at least $1-\exp(-5/3\eps)$, there are at least $5/\eps$ players bid no less than $b_{m-j^{*}}$ by Chernoff bound.

Therefore, at any value point $v_i$, if $s_i'(v_i) > b_{m- j^{*}}$, then $s_i(v_i) = s_i'(v_i)$ and
\begin{align}
\E_{v_{-i}\sim \D_{-i}}[u_{i}(v_i; s_i'(v_i); s_{-i}'(v_{-i}))] = \E_{v_{-i}\sim \D_{-i}}[u_{i}(v_i; s_i(v_i); s_{-i}(v_{-i}))] \label{eq:step4-1}
\end{align}
If $s_i'(v_i) \leq b_{m- j^{*}}$, then $s_i(v_i) \leq b_{m- j^{*}}$, and 
\begin{align}
\E_{v_{-i}\sim \D_{-i}}[u_{i}(v_i; s_i'(v_i); s_{-i}'(v_{-i}))] \geq -\eps/4 \quad \text{and}\quad \E_{v_{-i}\sim \D_{-i}}[u_{i}(v_i; s_i(v_i); s_{-i}(v_{-i}))] \leq \eps/4. \label{eq:step4-2}
\end{align}
Combining Eq.~\eqref{eq:step4-1} and Eq.~\eqref{eq:step4-2}, for any $v_i \in [0, 1]$, one has
\begin{align}
\E_{v_{-i}\sim \D_{-i}}[u_{i}(v_i; s_i'(v_i); s_{-i}'(v_{-i})) - u_{i}(v_i; s_i(v_i); s_{-i}(v_{-i}))] \geq -\eps/2. \label{eq:step4-4}
\end{align}
Similarly, one can prove 
\begin{align}
\E_{v_{-i}\sim \D_{-i}}[u_i(v_i;\bs(v_i, s_{-i}'), s_{-i}'(v_{-i}))] \leq \E_{v_{-i}\sim \D_{-i}}[u_i(v_i;\bs(v_i, s_{-i}), s_{-i}(v_{-i}))]  + \eps/2. \label{eq:step4-3}
\end{align}
Combining Eq.~\eqref{eq:step4-4} \eqref{eq:step4-3}, we have proved $s_i'$ is at most $2\eps$-worse than the best response.

\paragraph{Find a matching}Given a strategy profile $(p_1, \ldots, p_n)\in \sE_{\omega}$, we show how to define a bipartite matching problem, such that the $(2\eps, \delta)$-approximate BNE are one-to-one correspondence to the perfect bipartite matching of the graph.
The bipartite matching problem is defined between players $[n]$ and the strategy $\{p_i\}_{i \in [n]}$. 
We draw an edge between player $i_1$ and the strategy $p_{i_2}$, if $p_{i_2}$ is at most $2\eps$-worse than the best response (note the histogram of other players' bidding profile is determined) and the overbidding probability is small at most $\delta$. 
We note the best response can be computed in $O(mn^2)$ time and one can estimate the utility of a bid in $O(n^2)$ time using dynamic programming (similar as \cite{filos2021complexity}). Hence, it takes $n^2 \cdot O(mn^2)$ time to construct the bipartite graph, and a perfect matching can be found in time $O(n^3)$.
\end{proof}

Combining the above four steps (Lemma~\ref{lem:ptas-step1}, Lemma~\ref{lem:ptas-step2}, Lemma~\ref{lem:ptas-step3} and Lemma~\ref{lem:ptas-step4}), we conclude the proof of Theorem \ref{thm:ptas}.


\section*{Acknowledgement}
X.C. and B.P. would like to thank Aviad Rubinstein and anonymous STOC reviewers for helpful suggestions on the paper. The research of X.C. and B.P. is supported by NSF grants CCF-1703925, IIS-1838154, CCF-2106429 and CCF-2107187, CCF-1763970, CCF-2212233, COLL2134095, COLL2212745.

\newpage
\bibliographystyle{alpha}
\bibliography{ref}

\newpage
\appendix
\section{Missing proof from Section \ref{sec:pre}}
We provide the proof of Lemma \ref{lem:notion}.
\begin{proof}[Proof of Lemma \ref{lem:notion}]
Given an $\eps$-approximate BNE strategy $s$, we transform each individual strategy $s_i$ into another strategy $s_i'$, such that
\begin{align}
\E_{v_{-i}\sim \D_{-i}}[u_i(v; s_i'(v_i), s_{-i}(v_{-i})) - u_i(v; \bs(v_i; s_{-i}),  s_{-i}(v_{-i}))] \leq 10\sqrt{\eps} \quad  \forall v\in [0,1], i \in [n], \label{eq:transform1}
\end{align}
and
\begin{align}
\Pr_{v_i \sim \D_i}[s_i'(v_i) \neq s_{i}(v_i)] \leq \sqrt{\eps}\quad \forall i \in [n].\label{eq:transform2}
\end{align}

Given the above properties, one can easily conclude that $s' = (s_1', \ldots, s_n')$ is an $\eps'$-BNE with $\eps' = 10\sqrt{\eps} + 2n\sqrt{\eps}$, since the bidding histogram changes at most $n\sqrt{\eps}$ in total variational distance.

Let $\bs_{\delta}(v_i, s_{-i})\subseteq \mB$ be the set of $\delta$-best response (note we slightly abuse notation and its meaning is different from Section \ref{sec:ptas}), i.e., 
\begin{align*}
\bs_{\delta}(v, s_{-i}):= \big\{b\in \mB: \E_{v_{-i}\sim \D_{-i}}\big[u_i(v; \bs(v, s_{-i}), s_{-i}(v_{-i})) - u_i(v; b, s_{-i}(v_{-i}))\big] \leq \delta, b\leq v \big\}.
\end{align*}

To obtain the desired strategy $s_i'$, we modify the strategy $s_i$ in a descending order, from $s_i(1)$ to $s_i(0)$.
It would be convenient to describe it in a continuous way, though we note it could be easily implemented in polynomial time.
For each value $v \in [0, 1]$, let $b^{+}(v)$ be the smallest bid given the private value is above $v$, i.e., $b^{+}(v):= \min_{b \in \mB, v' > v}s_i'(v')$.
We divide into cases.
\begin{enumerate}
    \item If $s_i(v) \in \bs_{4\sqrt{\eps}}(v; s_{-i})$ and $s_i(v) \leq b^{+}(v)$, then we keep the strategy unchanged, i.e., $s_{i}'(v) = s_{i}(v)$;
    \item Otherwise, let $B_{\delta}(v) := \bs_{\delta}(v_i, s_{-i}) \cap [0, b^{+}(v)]$ be the set of available $\delta$-best response at $v$, then
    \begin{enumerate}
    \item If $B_{4\sqrt{\eps}}(v) \neq \emptyset$, then take $s_i'(v) = \max_{b\in B_{5\sqrt{\eps}}(v)}b$; otherwise
    \item Take $s_i'(v) = b^{+}(v)$ to be the maximum possible bid.
    \end{enumerate}
\end{enumerate}
It is easy to verify that the strategy is monotone. There is no overbidding since in case (2-b), $b^{+}(v) \leq \bs(v, s_{-i}) \leq v$. 
We next verify the strategy $s'$ satisfies the aforementioned properties.

To prove Eq.~\eqref{eq:transform1}, it suffices to consider the value $v\in [0,1]$ such that $s_i'(v) = b^{+}(v)$ and  $b^{+}(v) < b_{\min}(v)$ (i.e., case 2-b), where $b_{\min}(v) =\min_{b \in \bs_{4\sqrt{\eps}}(v, s_{-i})}b$ .
Let $v^{+} > v$ be the largest value such that $s_i'(v^{+}) = b^{+}(v)$, the maximum exists because it is wlog to assume the strategy is right closed. 
Then we know that $b^{+}(v) \in \bs_{5\sqrt{\eps}}(v^{+}, s_{-i})$ (i.e., it must fall into case 1 or case 2-a), therefore, 
\begin{align*}
5\sqrt{\eps} \geq &~ \E[u_{i}(v^{+};b_{\min}(v), s_{-i}(v_{-i})) - u_{i}(v^{+}; s_i'(v^{+}), s_{-i}(v_{-i}))] \\
\geq &~ \E[u_{i}(v;b_{\min}(v), s_{-i}(v_{-i})) - u_{i}(v; s_i'(v^{+}), s_{-i}(v_{-i}))].
\end{align*}
Here the second step holds due to $v^{+} \geq v$, $b_{\min}(v) > s_i'(v^{+}) = b^{+}(v)$ and the monotonicity of the utility difference.

Therefore, we have
\begin{align*}
&~ \E[u_{i}(v;\bs(v, s_{-i}), s_{-i}(v_{-i})) - u_{i}(v; s_{i}'(v), s_{-i}(v_{-i}))] \\
= &~ \E[u_{i}(v;\bs(v, s_{-i}), s_{-i}(v_{-i})) -u_i(v;b_{\min}(v), s_{-i}(v_{-i})) +  u_i(v;b_{\min}(v), s_{-i}(v_{-i})) - u_{i}(v; s_i'(v^{+}), s_{-i}(v_{-i}))] \\
\leq &~ 4\sqrt{\eps} + 5\sqrt{\eps} = 9\sqrt{\eps}.
\end{align*}
The second step holds since $b_{\min}(v) \in \bs_{4\sqrt{\eps}}(v, s_{-i})$. This proves Eq.~\eqref{eq:transform1}.


To prove  Eq.~\eqref{eq:transform2}, let $V_i = \{v: s_i(v) \notin \bs_{\sqrt{\eps}}(v; s_{-i})\}$. By Markov inequality, the probability on $V_i$ is small, i.e., 
\[
\Pr[v\in V_i] \leq \sqrt{\eps}.
\]

We next prove $s_i'(v) = s_i(v)$ holds for any $v \in [0,1]\backslash V_i$.
We prove by contradiction and suppose Eq.~\eqref{eq:transform2} is violated at value $v \in [0,1]\backslash V_i$. 
Then we have $s_i(v) > b^{+}(v)$ (i.e., not case 1).
Let $v^{+} > v$ be the maximum value such that $s_i'(v^{+}) < s_{i}(v)$, again the maximum exists because it is wlog to assume the strategy is right closed.
At value $v^{+}$, the bid $s_i(v)$ is feasible (i.e., $s_i(v) \leq b^{+}(v^{+})$) and we are not at case 2-b, due to the maximum assumption. 
It we are at case 1, then we have $s_i'(v^{+}) = s_{i}(v^{+}) \geq  s_{i}(v)$ due to monotonicity of $s$, this contradicts with the assumption that $s_i'(v^{+}) < s_{i}(v)$.
Therefore, the only possibility is that we are at Case 2-a, and it suffices to prove $s_i(v) \in \bs_{5\sqrt{\eps}}(v^{+}, s_{-i})$ to establish a contradiction. 

At case 2-a, let $b' \in B_{4\sqrt{\eps}}(v^{+})$ be any feasible $4\sqrt{\eps}$-best response. We have $b' \leq s_i'(v_1)$, and
\begin{align*}
&~\E[u_{i}(v^{+}; \bs(v^{+}; s_{-i}), s_{-i}(v_{-i})) - u_{i}(v^{+}; s_i(v), s_{-i}(v_{-i}))]\\
=&~\E[u_{i}(v^{+}; \bs(v^{+}; s_{-i}), s_{-i}(v_{-i})) - u_{i}(v^{+}; b', s_{-i}(v_{-i})) + u_{i}(v^{+}; b', s_{-i}(v_{-i})) - u_{i}(v^{+}; s_i(v), s_{-i}(v_{-i}))]\\
\leq &~4\sqrt{\eps} + \E[u_{i}(v^{+}; b', s_{-i}(v_{-i})) - u_{i}(v^{+}; s_i(v), s_{-i}(v_{-i}))]\\
\leq &~4\sqrt{\eps} + \E[u_{i}(v; b', s_{-i}(v_{-i})) - u_{i}(v; s_i(v), s_{-i}(v_{-i}))]
\leq 5\sqrt{\eps}.
\end{align*}
Here, the second step follows from $b' \in B_{4\sqrt{\eps}}(v^{+}) \subseteq \bs_{4\sqrt{\eps}}(v^{+}, s_{-i})$, the third step holds due to $v^{+} > v$, $b' \leq s_{i}(v)$ and the monotonicity of utility difference, 
and the last step follows from $v \notin V_1$. We finish the proof here.
\end{proof}

\end{document}